\documentclass[11pt]{article}
\usepackage[T1]{fontenc}
\usepackage[utf8]{inputenc}
\usepackage{lmodern}
\usepackage[margin=1in]{geometry}
\usepackage{amsmath,amssymb,amsthm,mathtools}
\usepackage{microtype}
\usepackage{booktabs,array,tabularx}
\usepackage{enumitem}
\usepackage{aliascnt,etoolbox,needspace}
\usepackage[dvipsnames]{xcolor}
\usepackage[numbers,sort&compress]{natbib}
\usepackage{xurl}
\definecolor{LinkBlue}{RGB}{28,66,132}
\definecolor{CiteTeal}{RGB}{0,94,92}
\usepackage[colorlinks=true,linkcolor=LinkBlue,citecolor=CiteTeal,urlcolor=LinkBlue,
  bookmarksnumbered=true,pdfencoding=auto,psdextra]{hyperref}
\usepackage[nameinlink,capitalise,noabbrev]{cleveref}
\usepackage{bookmark}
\hypersetup{pdftitle={Explicit tensors beyond the linear flattening barrier},
  pdfauthor={Benjamin Lovitz}}
\providecommand{\doi}[1]{\href{https://doi.org/#1}{\nolinkurl{doi:#1}}}
\providecommand{\arxiv}[1]{\href{https://arxiv.org/abs/#1}{\nolinkurl{arXiv:#1}}}
\setlist[enumerate]{label=\textup{(\roman*)},leftmargin=2.1em,itemsep=2pt,topsep=4pt}
\allowdisplaybreaks[1]
\numberwithin{equation}{section}
\newtheorem{theorem}{Theorem}[section]
\newaliascnt{lemma}{theorem}
\newtheorem{lemma}[lemma]{Lemma}\aliascntresetthe{lemma}
\newaliascnt{proposition}{theorem}
\newtheorem{proposition}[proposition]{Proposition}\aliascntresetthe{proposition}
\newaliascnt{corollary}{theorem}
\newtheorem{corollary}[corollary]{Corollary}\aliascntresetthe{corollary}
\theoremstyle{definition}
\newaliascnt{definition}{theorem}
\newtheorem{definition}[definition]{Definition}\aliascntresetthe{definition}
\newaliascnt{example}{theorem}
\aliascntresetthe{example}
\theoremstyle{remark}
\newaliascnt{remark}{theorem}
\aliascntresetthe{remark}
\crefname{theorem}{Theorem}{Theorems}\Crefname{theorem}{Theorem}{Theorems}
\crefname{lemma}{Lemma}{Lemmas}\Crefname{lemma}{Lemma}{Lemmas}
\crefname{proposition}{Proposition}{Propositions}\Crefname{proposition}{Proposition}{Propositions}
\crefname{corollary}{Corollary}{Corollaries}\Crefname{corollary}{Corollary}{Corollaries}
\crefname{definition}{Definition}{Definitions}\Crefname{definition}{Definition}{Definitions}
\crefname{example}{Example}{Examples}\Crefname{example}{Example}{Examples}
\crefname{remark}{Remark}{Remarks}\Crefname{remark}{Remark}{Remarks}
\crefname{equation}{Equation}{Equations}\Crefname{equation}{Equation}{Equations}
\crefname{section}{Section}{Sections}\Crefname{section}{Section}{Sections}
\crefname{table}{Table}{Tables}\Crefname{table}{Table}{Tables}
\BeforeBeginEnvironment{theorem}{\Needspace{7\baselineskip}}
\BeforeBeginEnvironment{proposition}{\Needspace{6\baselineskip}}
\BeforeBeginEnvironment{lemma}{\Needspace{5\baselineskip}}
\BeforeBeginEnvironment{corollary}{\Needspace{5\baselineskip}}
\BeforeBeginEnvironment{definition}{\Needspace{5\baselineskip}}
\BeforeBeginEnvironment{remark}{\Needspace{4\baselineskip}}
\newcommand{\CC}{\mathbb C}

\newcommand{\BR}{\underline{\mathrm R}}
\newcommand{\TR}{\mathrm R}
\DeclareMathOperator{\rank}{rank}
\DeclareMathOperator{\tr}{tr}
\DeclareMathOperator{\Span}{span}
\DeclareMathOperator{\Gr}{Gr}
\DeclareMathOperator{\GL}{GL}

\DeclareMathOperator{\Mat}{Mat}
\DeclareMathOperator{\Pot}{Pot}
\DeclareMathOperator{\id}{id}

\newcommand{\Poly}[1]{\CC[t]_{<#1}}
\newcommand{\Kos}[2]{\mathcal K_{#1}(#2)}

\title{Explicit tensors beyond the linear flattening barrier}
\author{Benjamin Lovitz\thanks{Concordia University, Montreal, Canada. benjamin.lovitz@concordia.ca}}
\date{\today\\[2pt]}

\begin{document}
\maketitle
\begin{abstract}
We construct explicit $n \times n \times n$ tensors of border rank at least $3n-o(n)$, improving the previous record of $(2+\varepsilon)n$ due to Landsberg--Micha{\l}ek~\cite{LM2025}. We also prove that the linear flattening method cannot be used to establish lower bounds on border rank beyond $2n-1$. When $n$ is odd, we prove that this bound is achieved by Koszul flattenings. When $n$ is even, a result of Landsberg~\cite{Landsberg2015} shows that Koszul flattenings can achieve $2n-2$, leaving an open gap of size one. Our $2n-1$ bound improves the best known $6n-4$ linear flattening barrier due to Garg et al.\ and Buczy{\'n}ski~\cite{GMOW2019,Buczynski2026}. Combined, these results show that our $3n-o(n)$ construction, as well as Landsberg--Micha{\l}ek's construction, provide explicit examples of tensors with higher border rank than any linear flattening can achieve.
\end{abstract}

\section{Introduction}\label{sec:introduction}

Let $(\CC^n)^{\otimes 3}$ be the space of $n \times n \times n$ tensors. We say that a tensor has \textit{rank one} if it can be written as $x \otimes y \otimes z:=(x_i y_j z_k)_{i,j,k}$ for some nonzero vectors $x,y,z \in \CC^n$. A \textit{decomposition} of a tensor $T\in (\CC^n)^{\otimes 3}$ is an expression of $T$ as a sum of rank-one tensors, and the \textit{tensor rank} $\TR(T)$ is the smallest number of rank-one tensors needed in a decomposition of $T$. The \textit{border rank} $\BR(T)$ is the smallest number $r$ for which $T$ can be written as a limit of tensors, each of tensor rank at most $r$. We let $\sigma_r \subseteq (\CC^n)^{\otimes 3}$ be the subset of tensors of border rank at most $r$.

The tensor rank of a tensor has a direct algorithmic interpretation: a decomposition into $r$ terms gives an algorithm for evaluating the associated bilinear map using $r$ products of linear forms applied to each input. Border rank captures the corresponding approximate notion, allowing the bilinear map to arise as a limit of maps of tensor rank at most $r$. A particularly well-motivated example is matrix multiplication, where decompositions of the associated tensor lead to algorithms for multiplying matrices~\cite{BCS1997,Landsberg2012}. A common yet difficult goal in complexity theory is to exhibit explicit objects with high complexity. Well-known examples include Boolean functions requiring large Boolean circuits~\cite{RazborovRudich1997} and polynomials requiring large arithmetic circuits~\cite{Valiant1979}. In this work, we address this goal for border rank by constructing explicit tensors with provably large border rank.

It is known that the maximum border rank of an $n\times n \times n$ tensor is $\lceil n^3/(3n-2)\rceil$ for all $n \neq 3$, and $5$ for $n=3$~\cite{Strassen1983,Lickteig1985}.  Despite this, the current best known lower bound on the border rank of an explicit tensor is $(2+\varepsilon)n$ for any $0 < \varepsilon <1/42$ for sufficiently large $n$, due to Landsberg--Micha{\l}ek~\cite[Theorem~1.2]{LM2025}. Prior to this, Landsberg
\cite{Landsberg2015} and Derksen--Makam \cite{DM2018,DM2019} found tensors of border rank nearly $2n$ using Koszul flattenings. For the matrix multiplication tensor, Landsberg--Micha{\l}ek
\cite{LM2018} proved a lower bound $2q^2-\lceil \log_2q \rceil-1$ in ambient factor dimension
$n=q^2$. For the standard tensor rank, Alexeev--Forbes--Tsimerman \cite{AFT2011} constructed explicit $0/1$ tensors
of tensor rank $3n-\Theta(\log n)$ over arbitrary fields, and Landsberg~\cite{Landsberg2014} proved that the matrix multiplication tensor has tensor rank at least $3q^2-o(q^2)$ over the complex numbers.

Our first result is to improve the best known lower bound on the border rank of an explicit tensor to $3n-o(n)$.

\begin{theorem}\label{thm:explicit}
There is a polynomial-time explicit tensor
$T_n\in(\CC^n)^{\otimes3}$ such that
\begin{equation}\label{eq:explicit-main}
 \BR(T_n)\ge3n- 11 n^{3/4}.
\end{equation}
\end{theorem}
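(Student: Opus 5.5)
The plan is to combine the border substitution method of Landsberg--Micha{\l}ek with Koszul flattenings applied to \emph{small} restrictions of $T_n$. The border substitution inequality gives, for $T\in A\otimes B\otimes C$ with $\dim A=n$ and any $m\le n$,
\[
\BR(T)\;\ge\;(n-m)+\min_{A'\in \Gr(m,A^*)}\BR\bigl(T|_{A'\otimes B^*\otimes C^*}\bigr).
\]
If we can build $T_n$ so that \emph{every} $m$-dimensional restriction in the first factor, with $m=o(n)$, still has border rank $2n-o(n)$, the bound becomes $(n-m)+2n-o(n)=3n-o(n)$. The main point is that the $2n$ ceiling for linear flattenings is paid for only on the small restricted tensor. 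The extra $n-m$ comes from substitution, which is not a flattening argument.

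\textbf{Koszul bound for the restriction.} For a tensor $S\in \CC^{m}\otimes\CC^n\otimes\CC^n$ with $m=2p+1$, the Koszul flattening $\Kos{p}{S}\colon \Lambda^p\CC^{m}\otimes \CC^{n}\to\Lambda^{p+1}\CC^{m}\otimes\CC^n$ yields
\[
\BR(S)\ge \rank \Kos{p}{S}\big/\tbinom{2p}{p}.
\]
When this flattening has full rank $\binom{2p+1}{p}n$, we get $\BR(S)\ge \frac{2p+1}{p+1}n=2n-\frac{n}{p+1}$. We will not need exactly full rank: a rank deficiency of $\binom{2p}{p}\cdot O(n^{3/4})$ is affordable. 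So we will take $p\approx n^{1/4}$ and $m=2p+1\approx 2n^{1/4}$. The losses are then $n/(p+1)\approx n^{3/4}$ from the Koszul ratio, $m$ from substitution, and a further $O(n^{3/4})$ from rank deficiency. These should add up to at most $11n^{3/4}$ once the constants are tracked. The choice of $p$ balances the Koszul loss $n/p$ against whatever deficiency the construction forces, and that deficiency will turn out to grow with $p$.

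\textbf{Construction.} Identify each factor with polynomials $\Poly{n}$, and let the $A$-slices of $T_n$ act as structured operators on polynomials, for instance truncated multiplication or shift operators composed with multiplication by explicit polynomials, possibly in Vandermonde-type coordinates. An arbitrary $m$-dimensional subspace $A'$ then corresponds to an arbitrary $m$-dimensional space of polynomials. The Koszul flattening of $T|_{A'}$ becomes a map between explicit spaces of polynomial-valued exterior forms. Its kernel can be bounded by a degree or root-counting argument, using that a nonzero polynomial of degree $<n$ has fewer than $n$ roots. Because the bound comes from an algebraic identity rather than genericity, it holds uniformly over the whole Grassmannian $\Gr(m,A^*)$. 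Polynomial-time explicitness is then immediate, since the entries are given by explicit formulas in the polynomial basis.

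\textbf{Main obstacle.} The key step is the uniform statement that $\rank\Kos{p}{T|_{A'}}\ge \binom{2p+1}{p}n-\binom{2p}{p}O(n^{3/4})$ for \emph{every} $A'$. Generic tensors easily have this property, but an adversarial subspace could align with the structure. My first attempt would be to order the basis so that the Koszul matrix is block triangular with respect to polynomial degree. I would then run the Derksen--Makam or Landsberg argument for the explicit $2n$-type tensors inside each graded piece, and charge the boundary degrees, about $p$ per block, to the $O(n^{3/4})$ error term. If a direct triangularity argument fails, the fallback is to bound the kernel dimension by a Schwartz--Zippel/root-counting estimate on the determinant of a suitable maximal minor, viewed as a polynomial in the Pl\"ucker coordinates of $A'$.
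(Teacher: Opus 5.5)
Your outline has the right architecture: border substitution to gain about $n$, and a central Koszul flattening with $p\approx n^{1/4}$ on the restricted tensors to gain about $2n$, with losses of order $n^{3/4}$. But it stops exactly where the proof has to start. There is no actual tensor, and no proof of the uniform lower bound on $\rank\Kos{p}{T|_{A'}}$ over all of $\Gr(2p+1,A^*)$, which is the whole content of the theorem.

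Worse, substituting all the way down to $2p+1\approx 2n^{1/4}$ dimensions asks for a uniformity that the block- or shift-structured constructions you suggest typically lack. Take the paper's tensor $T_{e,m}$, where $m=\lfloor n/e^2\rfloor\approx n^{1/2}$ is the number of blocks. The slices $T_{e,m}(a^*_{i,0})$ equal $\sum_j(\sum_u b_{j,u})\otimes(\sum_v c_{i+j,v})$. So the restriction to $\Span\{a^*_{i,0}:0\le i<m\}$ is a truncated polynomial multiplication tensor of border rank at most $me+m-1=O(n^{3/4})$. Since $m$ is much larger than $2p+1$, every $(2p+1)$-dimensional subspace of this span gives a restriction of border rank $O(n^{3/4})$, far below $2n$.

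Your fallback does not rescue this. The determinant of the square matrix $\Kos{p}{T|_{A'}}$ is a homogeneous polynomial of positive degree in the coordinates of a basis of $A'$, so it necessarily vanishes at some $A'$. Schwartz--Zippel only says it is nonzero generically. It gives no bound on the corank at a worst-case $A'$, which is what border substitution needs.

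The paper gets around this by paying more in substitution, which is still affordable.
\begin{enumerate}
\item It restricts to $q=ms\approx 2n^{3/4}$ dimensions, with $s=2e-1$.
\item A pigeonhole argument over the filtration $F_i=\bigoplus_{i'\le i}A_{i'}^*$ then yields $s$ covectors in $W\cap F_i$ whose projections to a \emph{single} block $A_i^*$ are independent. After row reduction these become monic polynomials $f_\ell$ of distinct degrees $<e^2$.
\item The support condition $k=i+j$ makes a submatrix of the Koszul matrix block upper triangular, with diagonal blocks $M_f(2^j)$.
\item Uniformity comes from root counting in the block parameter $t=2^j$, not in the subspace. For every admissible $f$, $\det M_f(t)$ is not identically zero: degenerate $t\to\infty$ to a Vandermonde tensor in general position. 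Its degree is at most $e\binom{s}{e-1}(e^2-1)$ regardless of $f$. This bounds $\sum_j\dim\ker M_f(2^j)$ uniformly in $W$.
\end{enumerate}
The missing idea in your proposal is a mechanism of this kind. It forces the restricted covectors into one block and moves the root count to a parameter the adversarial subspace does not control.
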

Here, we say that a tensor is \textit{polynomial-time explicit} if it
has integer entries and a deterministic algorithm records all of its
entries in binary in time polynomial in $n$, following
\citet[Definition~1.1]{LM2025}.

Can we do better? A common method of lower bounding the border rank is to find a nonzero polynomial $f(T)$ on $(\CC^n)^{\otimes 3}$ that is zero on $\sigma_r$. Such polynomials are called \textit{nontrivial equations} for $\sigma_r$, and they certify $\BR(T) > r$ for every tensor $T$ that is nonzero on $f$. An attractive way to find equations for $\sigma_r$ is via \textit{linear flattenings}. A linear flattening is a nonzero linear map from tensors to matrices $L: (\CC^n)^{\otimes 3} \rightarrow \Mat_{u,v}(\CC)$, where $u,v$ are positive integers. Let
\begin{equation}\label{eq:potency-intro}
 \rho(L)=\max_{x,y,z\in\CC^n}\rank L(x\otimes y\otimes z),\qquad
 \Pot(L)=\frac{\max_T\rank L(T)}{\rho(L)}.
\end{equation}
Following~\cite[Definition 1.7]{GMOW2019}, $\Pot(L)$ is called the
\emph{potency} of $L$. The \emph{optimal potency} $\Pi(n)$ is the supremum of $\Pot(L)$ over all nonzero linear flattenings, with no restriction on the output dimensions $u,v$. If $\Pot(L) > r$ then some $(\rho(L) r+1) \times (\rho(L) r+1)$ minor of $L(T)$ gives a nontrivial equation for $\sigma_r$. We call this the \textit{linear flattening method} for obtaining nontrivial equations for $\sigma_r$. The largest integer $r$ for which the linear flattening method can produce a nontrivial equation for $\sigma_r$ is $\lceil \Pi(n) \rceil -1$.\footnote{We note that there could be a minor of some $L(T)$ that happens to furnish a nontrivial equation for some larger $\sigma_r$ via a different argument. We do not include such cases in our definition of the linear flattening method.}

The first linear upper bound $\Pi(n)\leq 8n$ was obtained by Efremenko et al.~\cite{EGOW2018}, which was subsequently improved to $\Pi(n) \leq 6n-4$ by Garg et al.~\cite[Corollary~8.5]{GMOW2019}. Buczy{\'n}ski's recent account~\cite{Buczynski2026} explains this number through cactus varieties. See also Kothari--Moitra--Wein \cite[Theorems~2.9 and~2.10]{KMW2025}.

In terms of lower bounds, it is known that $\Pi(n) \geq 2n-2$ when $n$ is even~\cite[Theorem 1.2]{Landsberg2015}, and $\Pi(n) > 2n-4+o(1)$ when $n$ is odd~\cite[Theorem 6.1]{DM2018}, and both results are obtained by Koszul flattenings.

We determine $\Pi(n)$ exactly for odd $n$, and to within a vanishing additive gap for even $n$.

\begin{theorem}\label{thm:barrier}
For every $n\ge2$,
\begin{equation}\label{eq:sharpness-main}
 2n-2\le\Pi(n)\le 2n-2 + \frac{2}{n+1}.
\end{equation}
For odd $n$, the upper bound in \Cref{eq:sharpness-main} holds with
equality and is attained by the central Koszul flattening.
\end{theorem}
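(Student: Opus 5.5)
The lower bound and the odd case come from known Koszul flattening computations, so the real work is the upper bound $\Pi(n)\le 2n-2+\tfrac{2}{n+1}$ for an arbitrary linear flattening $L$. For the lower bound $\Pi(n)\ge 2n-2$ when $n$ is even, I would quote \cite[Theorem~1.2]{Landsberg2015}.

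For odd $n=2p+1$, write $T\in A\otimes B\otimes C$ with $A=B=C=\CC^n$. I would use the central Koszul flattening
\[
T_A^{\wedge p}\colon \Lambda^pA\otimes B^*\to\Lambda^{p+1}A\otimes C,
\]
where $T=\sum a\otimes M_a$ acts by $\omega\otimes\beta\mapsto\sum (a\wedge\omega)\otimes M_a(\beta)$.
\begin{itemize}
\item \emph{Rank-one image.} For $x\otimes y\otimes z$ the image is $(x\wedge\cdot)\otimes(y\otimes z)$. The map $x\wedge\cdot\colon\Lambda^pA\to\Lambda^{p+1}A$ has rank $\binom{2p}{p}$, so $\rho=\binom{n-1}{p}$.
\item \emph{Generic rank.} The map is square of size $n\binom{n}{p}$. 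I would cite (or reprove by exhibiting one tensor, e.g.\ via the standard $\mathfrak{sl}_2$/Pieri argument of Landsberg) that it has full rank for generic $T$.
\item \emph{Potency.} This gives
\[
\Pot = n\binom{2p+1}{p}\Big/\binom{2p}{p} = \frac{n(2p+1)}{p+1} = \frac{2n^2}{n+1} = 2n-2+\frac{2}{n+1}.
\]
\end{itemize}
This settles both the lower bound for odd $n$ (it exceeds $2n-2$) and the claimed attainment, once the upper bound is proved.

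For the upper bound, fix a nonzero linear $L$ with $\rho=\rho(L)$. The goal is to show $\rank L(T)\le \frac{2n^2}{n+1}\rho$ for generic $T$. Following the philosophy of \cite{GMOW2019,Buczynski2026}, I would express a generic $T$ as a limit of tensors built from structured pieces whose $L$-images have controlled rank. The current route there uses cactus-type schemes and loses a constant factor.

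The sharper plan is to slice along one factor: write $T=\sum_{i=1}^n a_i\otimes M_i$ and set $L_a(M):=L(a\otimes M)$. Each $L_a$ is a linear flattening of $B\otimes C$ with rank-one value at most $\rho$. I would then prove a matrix-level statement: on the span of the $n$ slices, the combined rank is at most $\rho$ times a quantity that, after averaging over a generic choice of basis of $A$ (equivalently, over the Grassmannian of $p$-subsets in the Koszul picture), equals $n\cdot\frac{2n}{n+1}$.

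Concretely, I would aim for the inequality
\[
\rank\Bigl(\sum_i L_{a_i}(M_i)\Bigr)\le \sum_i \rank L_{a_i}(M_i)-(\text{overlap}),
\]
with two ingredients:
\begin{itemize}
\item each $\rank L_a(M)\le 2\rho$ up to a correction, since a generic matrix pencil restricted to a line degenerates to a sum of two rank-one points after the appropriate limit;
\item the overlap is quantified by a double-counting over the $n+1$ coordinate hyperplanes in a projective frame. This is where the denominator $n+1$ should arise, matching the $(p+1)$ in the Koszul count.
\end{itemize}

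The main obstacle is precisely this sharp overlap/averaging step. Crude subadditivity only gives $n\cdot n\rho$, or $6n-4$ via border cactus rank. I expect one must exploit that the images $L(x\otimes y\otimes z)$ form a family whose column spaces vary algebraically, and run an extremal argument showing the Koszul configuration is worst-case.

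To attack it, I would first treat the case where $L$ factors through a single factor, i.e.\ $L$ is $\GL(B)\times\GL(C)$-equivariant. By Schur-type averaging (replacing $L$ by a direct sum of its translates, which does not change potency), I would reduce the general case to this one. Then I would decompose into isotypic components and bound each by the exterior-power ratio $\binom{n}{k+1}\big/\binom{n-1}{k}=\frac{n}{k+1}$, optimized at the central $k$.
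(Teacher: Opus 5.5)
Your proposal does not prove the upper bound $\Pi(n)\le 2n-2+\frac{2}{n+1}$, which is the main content of the theorem. You acknowledge that the sharp overlap/averaging step is open, and the ingredients you list cannot supply it:
\begin{enumerate}
\item The claim $\rank L_a(M)\le 2\rho$ ``up to a correction'' is false. For the ordinary flattening $L(T)=T_B\colon B^*\to A\otimes C$ one has $\rho(L)=1$, yet $L(a\otimes M)$ has rank $\rank M=n$ for a generic slice $M$. In general only $\rank L_a(M)\le n\rho$ holds.
\item The overlap inequality is never actually formulated.
\item A direct sum of finitely many translates of $L$ is not $\GL(B)\times\GL(C)$-equivariant.
\item Even for equivariant flattenings, which include all Young flattenings, the ratio $\binom{n}{k+1}/\binom{n-1}{k}$ is just the Koszul computation, not a bound on an arbitrary isotypic piece.
\end{enumerate}

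The missing idea is analytic. First normalize $L$: restrict to an $r\times r$ minor that is nonzero at some $T_*$. Then maximize $|\det(\sum_{ijk}(T_*)_{ijk}M_{ijk})|$ over the compact set of tuples with $\sum\|M_{ijk}\|_2^2\le r$ satisfying the rank-$\rho$ constraint on rank-one inputs. Comparing the maximizer with $H^{-1/2}M_{ijk}$ (resp.\ $M_{ijk}K^{-1/2}$) and applying AM--GM forces $\sum M_{ijk}M_{ijk}^\dagger=\sum M_{ijk}^\dagger M_{ijk}=I_r$.

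Next take $M=L(x\otimes y\otimes z)$ with $x,y,z$ uniform on unit spheres. Then $\mathbb E\|M\|_2^2=r/n^3$. The Haar moment $\mathbb E[xx^\dagger\otimes xx^\dagger]=(I+F)/(n(n+1))$, together with a positivity argument on $V\otimes\overline V$ ($V$ the coefficient tensor) using $nI-\Omega_D\Omega_D^\dagger\succeq0$, gives $\mathbb E\|M\|_4^4\le 2r/(n^4(n+1))$. Since $\rank M\le\rho$, we have $\|M\|_2^4\le\rho\|M\|_4^4$, hence $r^2/n^6\le\mathbb E\|M\|_2^4\le 2\rho r/(n^4(n+1))$, i.e.\ $r/\rho\le 2n^2/(n+1)$. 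The factor $n+1$ comes from this fourth moment, not from a projective frame.

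The odd-$n$ attainment also has a gap. Full rank of $\Kos{p}{T}$ for some $T\in(\CC^{2p+1})^{\otimes3}$ is not a citable fact. Before this paper it was known only in small cases (e.g.\ $n=5$, observed by Landsberg), with Derksen--Makam only suggesting the general pattern, and the best odd-$n$ bound was $2n-4$. The easy block construction (as in \Cref{lem:general-position} with $\dim B=\dim C=p+1$, then direct sums) does not apply directly, since $2p+1$ is not a multiple of $p+1$. An unspecified ``$\mathfrak{sl}_2$/Pieri argument'' does not replace a proof.

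The paper exhibits one explicit tensor. It identifies $B^*=\Poly{n}$ and $C=\CC^{p+1}\oplus\CC^p$, and sets $S_n=\sum_i a_i\otimes X_i$ with $X_i(f)=f(i)v_{p+1}(i)\oplus f'(i)v_p(i)$. Take a kernel vector $\sum_I a_I\otimes f_I$:
\begin{enumerate}
\item The first (Vandermonde) block forces $f_I(i)=0$ for $i\notin I$.
\item The second block has a one-dimensional kernel and forces $f'_{J\setminus\{i\}}(i)=\alpha_J\Delta_{J\setminus\{i\}}$.
\item A Lagrange-interpolation identity plus an inclusion--exclusion inversion give $\alpha_J=0$.
\end{enumerate}
So each $f_I$ has $p+1$ double roots but $\deg f_I\le 2p$, and therefore vanishes. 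You need to supply an argument of this kind.
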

The central Koszul flattening $\mathcal K_{(n-1)/2}$ is defined in \Cref{def:koszul}. In particular, we obtain the following:
\begin{corollary}\label{cor:sharpness}
For every $n\ge2$, the linear flattening method
can yield a nontrivial equation for $\sigma_r$ only if
$r\le2n-2$. For odd $n\ge3$, equality is attained by the
central Koszul flattening. For even $n\ge2$, Koszul flattenings
attain $r=2n-3$.
\end{corollary}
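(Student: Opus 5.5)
The corollary follows from \Cref{thm:barrier} together with the definition of the linear flattening method; the only work is tracking the integer parts. By definition, the largest $r$ for which the linear flattening method yields a nontrivial equation for $\sigma_r$ is $\lceil \Pi(n)\rceil-1$. More concretely, a linear flattening $L$ certifies $\sigma_r$ exactly when $\Pot(L)>r$. Indeed, in that case $\max_T \rank L(T) \ge \rho(L) r+1$, so some $(\rho(L)r+1)$-minor of $L(T)$ is a nonzero polynomial in $T$. Subadditivity of rank and lower semicontinuity then make this minor vanish on $\sigma_r$, since $\rank L(T)\le r\rho(L)$ for every $T$ of border rank at most $r$.

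For the upper bound, apply \Cref{thm:barrier}. Any nonzero $L$ has $\Pot(L)\le \Pi(n)\le 2n-2+\tfrac{2}{n+1}$. For $n\ge2$ we have $0<\tfrac{2}{n+1}<1$, so an integer $r$ with $r<\Pot(L)$ satisfies $r<2n-1$, that is, $r\le 2n-2$.

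For odd $n\ge3$, \Cref{thm:barrier} gives $\Pot(\Kos{(n-1)/2}{\cdot})=2n-2+\tfrac{2}{n+1}>2n-2$. Taking $r=2n-2$ in the criterion above, the central Koszul flattening produces a nontrivial equation for $\sigma_{2n-2}$, so equality is attained.

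For even $n$, the lower bound $\Pi(n)\ge 2n-2$ in \Cref{thm:barrier} comes from Landsberg's Koszul flattening~\cite[Theorem 1.2]{Landsberg2015}, which has potency at least $2n-2>2n-3$. The criterion then gives an equation for $\sigma_{2n-3}$. When $n=2$, $2n-3=1$ and $\sigma_1$ is cut out by $2\times2$ minors of an ordinary flattening, which is the trivial Koszul flattening, so this case is also covered.

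The one point needing care is that the even-$n$ potency bound may be attained only as a value $\ge 2n-2$ and not strictly above it. That is why the conclusion there is $r=2n-3$ rather than $2n-2$, which matches the gap of size one stated in the abstract. I do not expect any real obstacle; all the content sits in \Cref{thm:barrier}.
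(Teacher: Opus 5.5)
Your proposal is correct and follows the paper's route: the upper bound and the odd case are read off from \Cref{thm:barrier} via the criterion $r<\Pot(L)$, and the even case comes from Landsberg's construction (a Koszul flattening precomposed with a projection onto $\CC^{n-1}$ in the first factor, of potency exactly $2n-2$), with an ordinary flattening handling $n=2$. One small inaccuracy: the $2\times 2$ minors of a single ordinary flattening do not cut out $\sigma_1$, since they also vanish on $b\otimes M$ with $\rank M=2$; all you need is that some such minor is a nonzero polynomial vanishing on $\sigma_1$, which holds because that flattening has potency $2>1$.
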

This corollary follows directly from~\Cref{thm:barrier} and the discussion above, with the exception of the even-$n$ construction, which is given in~\cite[Theorem 1.2]{Landsberg2015} via the composition of a Koszul flattening with the projection onto $\CC^{n-1}$ in the first factor. For odd $n$, this improves (optimally) the bound $r=2n-4$ obtained in~\cite[Theorem 6.1]{DM2018}, generalizes the $n=5$ case observed in~\cite[Proposition~1.3]{Landsberg2015} to arbitrary $n$, and confirms the general behaviour suggested in~\cite[Remark~6.2]{DM2018}.

\subsection{Proof overviews}

\subsubsection*{Proof overview for~\Cref{thm:explicit}}\label{sec:overview}

We construct $T_n$ as follows. For positive integers $e,m$, let
$T_{e,m}\in(\CC^{me^2})^{\otimes3}$ be the tensor with entries
\[
 (T_{e,m})_{(i,\alpha),(j,u),(k,v)}
 =
 \begin{cases}
  2^{\alpha(j+u+v)},&k=i+j,\\
  0,&k\ne i+j,
 \end{cases}
\]
where
\[
 0\le i<m,\qquad 0\le j,k<me,\qquad
 0\le\alpha<e^2,\qquad 0\le u,v<e.
\]
We then take $e=\lfloor n^{1/4}\rfloor$ and $m=\lfloor n/e^2\rfloor$
and obtain $T_n$ by padding $T_{e,m}$ with zeroes.
This construction is polynomial-time explicit because each nonzero entry
is of the form $2^r$ for $r=n^{O(1)}$.

Our construction and proof is similar to that of
\citet{LM2025}. Their tensor coordinates are indexed by $i,j,k$, with
nonzero entries supported on $k=i+j$, while our tensor has the same support
pattern at the level of blocks.
Our overall proof technique is also similar: we use border substitution
to reduce the task of lower bounding $\BR(T_{e,m})$ to that of uniformly
lower bounding the border rank of its restrictions to lower-dimensional subspaces of the first factor,
for which we use Koszul flattenings. The $3n-o(n)$ lower bound follows from roughly $n$ from border substitution and $2n$ from Koszul flattenings of the restrictions. To lower bound the ranks of the Koszul flattenings, we analyze the
block structure of the restricted tensor. In short, the symmetry $k=i+j$ together with the low-dimensional restriction allows us to use a pigeonhole argument to find a block upper
triangular submatrix of the Koszul flattening of a further restriction of $T_{e,m}$, whose rank we can analyze.

\subsubsection*{Proof overview for~\Cref{thm:barrier}}
\emph{Lower bound:} The $2n-2$ lower bound for even $n=2p+2\ge4$ follows from
Landsberg~\cite[Theorem~1.2]{Landsberg2015}, who constructs an $n \times n \times n$ tensor for which the flattening $L(T)$, obtained from projecting the first factor onto $\CC^{n-1}$ and applying the $p=n/2-1$ Koszul flattening, has rank $n\binom{2p+1}p$ and satisfies $\rho(L)=\binom{2p}p$, hence $\Pot(L)=n\binom{2p+1}{p}/\binom{2p}{p}=2n-2$. For $n=2$, an ordinary flattening gives the same bound.

\emph{Upper bound:}
We prove the upper bound $\Pi(n) \leq 2n-2 + \frac{2}{n+1}$ via a probabilistic argument. Let $L$ be a linear flattening with $\rho(L)=\rho$ and $\Pot(L)=r/\rho$. Then we can assume that $L : (\CC^n)^{\otimes 3} \rightarrow \Mat_r(\CC)$, $L(T)$ has full rank for some $T$, and $L(x \otimes y \otimes z)$ has rank at most $\rho$ for all $x,y,z \in \CC^n$. Using the Kempf--Ness theorem, we can assume that $L(T)=\sum_{ijk} T_{ijk} M_{ijk} $ for some matrices $M_{ijk} \in \Mat_r(\CC)$ for which
\begin{align}\label{eq:HK}
H:=\sum_{ijk} M_{ijk} M_{ijk}^{\dagger}, \qquad K:= \sum_{ijk} M_{ijk}^{\dagger} M_{ijk}
\end{align}
satisfy $H=K=I_r$. Let $M=M(x,y,z)=L(x\otimes y \otimes z)$. Taking the average over the uniform spherical measures on each copy of $\CC^n$, we prove that
\begin{align}\label{eq:moment_bounds}
 \mathbb E\|M\|_2^2=\frac{\tr H}{n^3}=\frac{r}{n^3},
 \qquad
 \mathbb E \|M\|_4^4
 \le\frac{\|H\|_2^2+\|K\|_2^2}{n^4(n+1)}=\frac{2r}{n^4(n+1)},
\end{align}
where $\| \cdot\|_p$ is the Schatten $p$-norm (e.g. $\|M\|_2$ is the Hilbert--Schmidt norm). While the first equation is well-known, we were unable to find the upper bound on $\mathbb E \|M\|_4^4$ in the literature, and we suggest that it could be of independent interest. From here, two applications of Cauchy-Schwarz gives
\[
\frac{r^2}{n^6}=\left(\mathbb E\|M\|_2^2\right)^2
\le
\mathbb E\|M\|_2^4
\le
\rho\, \mathbb E \|M\|_4^4\leq \frac{2 \rho r}{n^4(n+1)},
\]
which implies the desired upper bound $\Pot(L)=r/\rho\leq 2n-2 + \frac{2}{n+1}$.

\emph{Sharpness for odd $n$:} To prove equality for odd $n=2p+1$, we construct an $n \times n \times n$ tensor $S_n$ for which the central Koszul flattening $\Kos{p}{S_n}$ has full rank. This suffices because $\Kos{p}{S_n}$ is an $n\binom np\times n\binom np$ matrix and $\rho(\mathcal K_p)=\binom{n-1}p$ (see \Cref{prop:koszul-bound}), so full rank gives potency $n\binom np/\binom{n-1}p=2n-2+2/(n+1)$. Let $A=B=C=\CC^n$. The tensor $S_n$ is given by
\begin{equation*}
 S_n=\sum_{i=1}^n a_i\otimes X_i,
 \qquad
 X_i(f)=f(i)v_{p+1}(i)\oplus f'(i)v_p(i),
\end{equation*}
where $v_q(t):=(1,t,\ldots,t^{q-1})^{\mathsf T}$, $X_i : B^* \rightarrow C$, and we identify $B^*\cong \Poly{n}$ and $C\cong \CC^{p+1}\oplus\CC^p$. A vector in the kernel of $\Kos{p}{S_n}$ is a linear combination of terms of the form $a_I \otimes f_I$, where $a_I:=a_{i_1}\wedge \cdots \wedge a_{i_p} \in \Lambda^p(A)$ is the standard basis indexed by $p$-subsets $I =\{i_1<\cdots<i_p\}\subseteq [n]$, and $f_I \in B^*= \Poly{n}$. We then prove that the polynomials $f_I$ have at least $n$ roots counted with multiplicity, so they are identically zero. To do this, we use the kernel condition and the Vandermonde structure of the vectors $v_q(t)$ to show that both $f_I(i)=0$ and $f'_I(i)=0$ for all $i\notin I$, giving $2(p+1)=n+1$ roots with multiplicity.

\subsection{Open questions}

In this section we list several questions left open by this work.

\begin{enumerate}
\item Can we find explicit tensors of border rank (or even, tensor rank) greater than $3n$? This will require techniques beyond the method of border substitution + linear flattenings on the restricted tensors, such as border apolarity~\cite{BuczynskaBuczynski2021,ConnerHarperLandsberg2023} or the recently introduced Kronecker-Young flattenings~\cite{DolezalekMichalek2026}.
\item For an even integer $n$, what is the largest $r$ for which the linear flattening method can yield nontrivial equations for $\sigma_r$? By~\Cref{cor:sharpness}, $r \in \{2n-3, 2n-2\}$. For $n=2$, the answer is $r=1$ because $\sigma_2=(\CC^2)^{\otimes 3}$. For odd $n$, our results show that $r=2n-2$, and this is achieved by the central Koszul flattening.
\item To what extent can these techniques be generalized to exhibit explicit non-square tensors beyond the linear flattening barrier?
\end{enumerate}
\subsection{Organization}

Relevant background is covered in \Cref{sec:background}. The explicit tensors of high border rank presented in \Cref{thm:explicit} are constructed in~\Cref{sec:construction}. The linear flattening barrier presented in \Cref{thm:barrier} is proven in~\Cref{sec:barrier-tensor}, and sharpness for odd $n$ is proven in~\Cref{sec:sharpness}.

\paragraph{Use of artificial intelligence} The author used GPT-6 Astra interactively throughout every stage of preparing this manuscript, including to obtain proofs of the main results, which the author subsequently verified and simplified. The author takes full responsibility for the correctness, exposition, and attribution in this manuscript.

\paragraph{Acknowledgments} The author thanks Harm Derksen for helpful feedback on a draft of this manuscript. The author acknowledges support from NSERC Discovery Grant number RGPIN-2026-05413.

\section{Background}\label{sec:background}

We now introduce some necessary background for this work. All vector spaces are finite-dimensional over $\CC$.

Let $A, B, C$ be vector spaces.  We will frequently regard a tensor
$T\in A\otimes B\otimes C$ as a linear map
$A^*\to B\otimes C$, and we regard $B\otimes C$
as the space of linear maps $B^*\to C$.  For $W\subseteq A^*$,
write $T|_W\in W^*\otimes B\otimes C$ for the restricted contraction
map.  The tensor is \emph{$A$-concise} if its contraction map on
$A^*$ is injective.

For a complex matrix $X$, write $X^\dagger$ for its Hermitian adjoint.
If $\sigma_j(X)$ are its singular values, its \emph{Schatten norms} are
\[
 \|X\|_p=\left(\sum_j\sigma_j(X)^p\right)^{1/p}
 \quad(1\le p<\infty),
 \qquad
 \|X\|_\infty=\max_j\sigma_j(X).
\]
In particular, $\|X\|_\infty$ is the Euclidean operator norm, and
$\|X\|_2$ is the Hilbert--Schmidt norm.

\begin{definition}\label{def:vandermonde}
For an integer $d\ge1$ and $x_1,\ldots,x_d\in\CC$, the
\emph{Vandermonde matrix} is
\[
 V(x_1,\ldots,x_d)
 =\bigl[x_j^{i-1}\bigr]_{1\le i,j\le d}
 =\begin{pmatrix}
  1&1&\cdots&1\\
  x_1&x_2&\cdots&x_d\\
  \vdots&\vdots&&\vdots\\
  x_1^{d-1}&x_2^{d-1}&\cdots&x_d^{d-1}
 \end{pmatrix}.
\]
Its determinant is \cite[Equation~(1.3.13)]{DLMF}
\begin{equation}\label{eq:vandermonde-determinant}
 \det V(x_1,\ldots,x_d)
 =\prod_{1\le i<j\le d}(x_j-x_i).
\end{equation}
Thus $V(x_1,\ldots,x_d)$ is invertible if and only if
$x_1,\ldots,x_d$ are pairwise distinct.
\end{definition}

For a positive integer $h$, we let $v_h(t):=(1,t,\dots, t^{h-1})^{\mathsf T}$, we let $\Poly{h}$ be the space of univariate polynomials of degree $<h$, and we let $[h]:=\{1,\dots, h\}$.

For an ordered basis $z_1,\ldots,z_n$ of $A$ and a $p$-subset $I=\{i_1<\cdots<i_p\}\subseteq [n]$, we let
$z_I=z_{i_1}\wedge\cdots\wedge z_{i_p}\in\Lambda^pA$, with
$z_\varnothing=1$.

We now define Koszul flattenings. See~\cite{LandsbergOttaviani2013} and \cite[Equation 2.1]{LO2015}. See also \cite[Section~2.2.2]{KMW2025} for historical notes on Koszul flattenings and the more general Young flattenings.
\begin{definition}[Koszul flattenings]\label{def:koszul}
For each $0\le p<\dim A$, define the \textit{Koszul flattening} on $A \otimes B \otimes C$ by
\begin{align}\label{eq:koszul-definition}
 \Kos{p}{a \otimes b \otimes c}:\Lambda^pA \otimes B^*&\longrightarrow
              \Lambda^{p+1} A \otimes C,\\
 \omega\otimes\varphi&\longmapsto
       (a \wedge\omega)\otimes \varphi(b) c
 \notag
\end{align}
for $a \in A, \; b \in B, \; c\in C$, extended linearly.
\end{definition}

The following proposition is well-known. We prove it for completeness.

\begin{proposition}[Theorem~2.1 in \cite{LO2015}]\label{prop:koszul-bound}
Let $A,B,C$ be nonzero complex vector spaces of dimensions
$n_A,n_B,n_C$, and let $0\le p<n_A$.
The degree-$p$ Koszul flattening satisfies
\begin{equation*}
 \rho(\mathcal K_p)=\binom{n_A-1}{p},
\end{equation*}
and for every $S\in A\otimes B\otimes C$,
\begin{equation*}
 \BR(S)\ge
 \frac{\rank\Kos{p}{S}}{\binom{n_A-1}{p}}.
\end{equation*}
\end{proposition}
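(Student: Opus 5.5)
The plan is to compute the rank of $\Kos{p}{a\otimes b\otimes c}$ directly for a rank-one tensor, and then use subadditivity and lower semicontinuity of matrix rank to obtain the border rank bound.

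\emph{Rank-one computation.} Fix nonzero $a,b,c$. By \Cref{def:koszul} the map factors as a tensor product of two linear maps. The first is
\[
\omega\mapsto a\wedge\omega, \qquad \Lambda^pA\to\Lambda^{p+1}A.
\]
The second is
\[
\varphi\mapsto\varphi(b)c, \qquad B^*\to C.
\]
The second map has rank one, since $b\neq0$ and $c\neq0$. Therefore
\[
\rank\Kos{p}{a\otimes b\otimes c}=\rank(a\wedge\cdot)\cdot 1,
\]
using $\rank(X\otimes Y)=\rank X\cdot\rank Y$.

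To compute $\rank(a\wedge\cdot)$, extend $a=z_1$ to a basis $z_1,\dots,z_{n_A}$ of $A$. Then $z_I\mapsto z_1\wedge z_I$. This is zero when $1\in I$. For $I\not\ni1$ the images $z_{\{1\}\cup I}$ are distinct basis vectors of $\Lambda^{p+1}A$, hence linearly independent. So the rank equals the number of $p$-subsets of $\{2,\dots,n_A\}$, which is $\binom{n_A-1}{p}$.

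If any of $a,b,c$ is zero, the image is the zero matrix. Taking the maximum over all $a,b,c$ therefore gives $\rho(\mathcal K_p)=\binom{n_A-1}{p}$; the hypothesis $p<n_A$ guarantees this is a genuine maximum, attained at nonzero vectors.

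\emph{Border rank bound.} Suppose $\BR(S)=r$. Write $S=\lim_{\varepsilon\to0} S_\varepsilon$, where each $S_\varepsilon$ is a sum of $r$ rank-one tensors.

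1. By linearity of $\mathcal K_p$ and subadditivity of rank,
\[
\rank\Kos{p}{S_\varepsilon}\le r\binom{n_A-1}{p}.
\]
2. Since $\mathcal K_p$ is linear, it is continuous, so $\Kos{p}{S_\varepsilon}\to\Kos{p}{S}$.
3. The set of matrices of rank at most a fixed bound is Zariski closed, since it is cut out by minors, and hence closed in the Euclidean topology. So rank is lower semicontinuous, and
\[
\rank\Kos{p}{S}\le r\binom{n_A-1}{p}.
\]

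Rearranging gives the stated inequality. No step is a real obstacle; the only point needing care is the basis argument that $a\wedge\cdot$ has exactly $\binom{n_A-1}{p}$ independent images.
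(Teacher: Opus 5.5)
Your proposal is correct and follows essentially the same route as the paper. Both arguments factor the rank-one Koszul map as the wedge map tensored with a rank-one map, count $\binom{n_A-1}{p}$ independent images after extending $a$ to a basis, and then pass to border rank via subadditivity and closedness of the bounded-rank locus.
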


\begin{proof}
For a nonzero rank-one tensor $S=a\otimes b\otimes c$,
the Koszul map is
\[
 \Kos{p}{S}(\omega\otimes\varphi)
 =
 (a\wedge\omega)\otimes\varphi(b)c,
\]
i.e. it is the tensor product of
\[
 W_a:\Lambda^pA\longrightarrow\Lambda^{p+1}A,
 \qquad \omega\longmapsto a\wedge\omega
\]
with the rank-one map
\[
 B^*\longrightarrow C,
 \qquad
 \varphi\longmapsto\varphi(b)c.
\]
Extend $a=a_1$ to a basis $a_1,\ldots,a_{n_A}$ of $A$.
The map $W_a$ annihilates the exterior basis vectors containing
$a_1$. Its images on the remaining basis vectors are
\[
 a_1\wedge a_{i_1}\wedge\cdots\wedge a_{i_p},
 \qquad
 2\le i_1<\cdots<i_p\le n_A,
\]
which are linearly independent.
There are $\binom{n_A-1}{p}$ such vectors, so
\[
 \rank\Kos{p}{a\otimes b\otimes c}
 =
 \binom{n_A-1}{p}.
\]
This proves the first equation.

If $S$ is a sum of $r$ rank-one tensors, linearity of
$\mathcal K_p$ and subadditivity of matrix rank give
\[
 \rank\Kos{p}{S}\le r\binom{n_A-1}{p}.
\]
Since this is a closed condition, the same inequality holds when $\BR(S)\le r$.
\end{proof}


We now recall the border-substitution method of
\cite{LM2017}; see also \cite[Equation~(2.1)]{LM2025}. We include a proof for completeness. Let $\Gr(q,A^*)$ be the Grassmannian of $q$-dimensional subspaces of $A^*$.

\begin{lemma}[Equation~(2.1) in \cite{LM2025}]\label{lem:substitution}
Let $T\in A\otimes B\otimes C$ be $A$-concise, where
$\dim A=\dim B=\dim C=n$. For $1\le q\le n$,
\begin{equation}\label{eq:substitution}
 \BR(T)\ge n-q+
       \min_{W\in\Gr(q,A^*)}\BR(T|_W).
\end{equation}
\end{lemma}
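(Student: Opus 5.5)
We prove this border substitution inequality by following the standard argument of Landsberg--Micha{\l}ek: degenerate a border rank decomposition, and then kill $n-q$ of its rank-one terms by restricting to a suitable $q$-dimensional subspace of $A^*$. Let $r=\BR(T)$. By definition, we write $T=\lim_{t\to0}T_t$, where
\[
 T_t=\sum_{\ell=1}^r a_\ell(t)\otimes b_\ell(t)\otimes c_\ell(t)
\]
and each $T_t$ has tensor rank at most $r$. Set $E_t:=\Span\{a_\ell(t)\}\subseteq A$. Since $\dim E_t\le r$ and $E_t$ lives in a finite union of Grassmannians, we pass to a subsequence along which $\dim E_t=s$ is constant and $E_t\to E\in\Gr(s,A)$. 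This uses compactness of the Grassmannian; working with one-parameter curves, we can take the limit of the curve $t\mapsto E_t$ directly. Conciseness gives $T(A^*)\cong A$, so the $A$-image of $T$ is all of $A$. Since $T_t\in E_t\otimes B\otimes C$, the limit satisfies $T\in E\otimes B\otimes C$, hence $E=A$, and in particular $s=n$.

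Next we choose the subspace. Reorder the terms so that $a_1(t),\dots,a_{n-q}(t)$ are linearly independent for all small $t$. Such an independent subset exists among vectors spanning an $n$-dimensional space, and by passing to a further subsequence we may fix the index set. Let $U_t=\Span\{a_1(t),\dots,a_{n-q}(t)\}$, and by compactness again assume $U_t\to U\in\Gr(n-q,A)$. Define $W_t:=U_t^\perp\in\Gr(q,A^*)$ and $W:=U^\perp$, so that $W_t\to W$. Restricting to $W_t$ kills the first $n-q$ terms, so
\[
 T_t|_{W_t}=\sum_{\ell>n-q}a_\ell(t)|_{W_t}\otimes b_\ell(t)\otimes c_\ell(t)
\]
has rank at most $r-(n-q)$.

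The main obstacle is passing to the limit in the restriction, because the ambient space $W_t^*\otimes B\otimes C$ itself moves with $t$. We handle this by choosing $g_t\in\GL(A^*)$ with $g_t\to\id$ and $g_tW=W_t$, which is possible because $\GL$ acts transitively on the Grassmannian with a local section near $W$. Then $(g_t^{*}\cdot T_t)|_W$ is identified with $T_t|_{W_t}$ and so has rank at most $r-n+q$. As $t\to0$, these restrictions converge to $T|_W$, because restriction is continuous and $g_t\to\id$. Hence $\BR(T|_W)\le r-n+q$. Rearranging gives $\BR(T)\ge n-q+\BR(T|_W)\ge n-q+\min_{W'}\BR(T|_{W'})$, which proves \Cref{eq:substitution}.
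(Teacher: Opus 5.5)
Your proof is correct and follows the same route as the paper: approximate $T$ by tensors of rank at most $r$, use conciseness to see that the first-factor vectors span $A$, annihilate $n-q$ independent ones to get $W_t$, and use compactness of the Grassmannian to pass to a limit $W$. The differences are only technical. You get the spanning property by taking limits of the spans $E_t$ rather than from openness of conciseness, and your moving frame $g_t\to\id$ makes explicit the convergence $T_t|_{W_t}\to T|_W$, which the paper simply asserts.
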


\begin{proof}
Put $r=\BR(T)$ and choose $T_j\to T$ with $\TR(T_j)\le r$.
For all sufficiently large $j$, the tensors $T_j$ are
$A$-concise.  The first-factor vectors in a rank decomposition
of $T_j$ therefore span $A$.  Choose $n-q$ independent vectors
among them, and let $W_j\subseteq A^*$ be their common
annihilator.  Then $W_j$ is a $q$-dimensional subspace, and
\[
 \TR(T_j|_{W_j})\le r-n+q.
\]
By compactness of the complex Grassmannian, we can pass to a convergent subequence
with $W_j\to W$.  The restricted tensors $T_j|_{W_j}$
then converge to $T|_W$, so $\BR(T|_W)\le r-n+q$.
\end{proof}

\section{An explicit construction}\label{sec:construction}
In this section we prove~\Cref{thm:explicit}. Throughout this section, fix an integer $e\ge2$ and put $s=2e-1$.
The $(e-1)$-th Koszul matrix of a tensor in
$\CC^s\otimes\CC^e\otimes\CC^e$ is square of size
$e\binom{s}{e-1}$, since $\binom{s}{e}=\binom{s}{e-1}$.

We first establish two lemmas. The first is an elementary sufficient condition for the central Koszul matrix to be invertible.

\begin{lemma}\label{lem:general-position}
Let $z_1,\ldots,z_s$ be a basis of $E$, and let
$b_1,\ldots,b_s\in B$ and $c_1,\ldots,c_s\in C$,
where $\dim B=\dim C=e$.
Let
\[
 T=\sum_{\ell=1}^s z_\ell\otimes b_\ell\otimes c_\ell.
\]
If every subset of $e$ vectors in each collection is linearly independent, then
$\Kos{e-1}{T}$ is invertible.
\end{lemma}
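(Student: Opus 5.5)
The plan is to prove directly that $\Kos{e-1}{T}$ is injective. Since it is square, of size $e\binom{s}{e-1}$ as noted above, injectivity is the same as invertibility. The argument is a pure linear-independence computation in the standard basis $z_I$ of $\Lambda^{e-1}E$ and $\Lambda^eE$. It uses the hypothesis on the $c_\ell$ once and the hypothesis on the $b_\ell$ once.

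First, I write a general element of the domain as $\xi=\sum_{I}z_I\otimes\varphi_I$. Here $I$ ranges over the $(e-1)$-subsets of $[s]$ and $\varphi_I\in B^*$. By \Cref{def:koszul} and linearity,
\[
 \Kos{e-1}{T}(\xi)=\sum_I\sum_{\ell\notin I}(z_\ell\wedge z_I)\otimes\varphi_I(b_\ell)\,c_\ell .
\]
For $\ell\notin I$ we have $z_\ell\wedge z_I=\pm z_{I\cup\{\ell\}}$. Grouping by $J=I\cup\{\ell\}$, an $e$-subset, the component of $\Kos{e-1}{T}(\xi)$ along $z_J$ is
\[
 \sum_{\ell\in J}\epsilon_{J,\ell}\,\varphi_{J\setminus\{\ell\}}(b_\ell)\,c_\ell\in C,
\]
where each $\epsilon_{J,\ell}=\pm1$. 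The exact signs are irrelevant to the argument.

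Second, suppose $\Kos{e-1}{T}(\xi)=0$. Then each of these components vanishes. For a fixed $J$, the vectors $\{c_\ell\}_{\ell\in J}$ are $e$ of the $c$'s, so they are linearly independent by hypothesis. Hence every coefficient is zero: $\varphi_{J\setminus\{\ell\}}(b_\ell)=0$ for all $\ell\in J$. Ranging over all $J$, this says exactly that $\varphi_I(b_\ell)=0$ for every $(e-1)$-subset $I$ and every $\ell\notin I$.

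Third, fix $I$. The complement $[s]\setminus I$ has $s-(e-1)=e$ elements. By hypothesis the corresponding $b_\ell$ are $e$ linearly independent vectors in the $e$-dimensional space $B$, so they form a basis. The functional $\varphi_I$ vanishes on this basis, so $\varphi_I=0$. Therefore $\xi=0$, and $\Kos{e-1}{T}$ is injective, hence invertible.

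There is no real obstacle in this argument. The only point needing care is the bookkeeping in the first step: checking that the coefficient of $z_J$ involves only the pairs $(J\setminus\{\ell\},\ell)$ with $\ell\in J$, each with a nonzero sign.
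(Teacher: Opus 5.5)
Your proposal is correct and follows the paper's proof essentially step for step. You write a kernel vector as $\sum_I z_I\otimes\varphi_I$, use independence of the $c_\ell$ for $\ell\in J$ to get $\varphi_I(b_\ell)=0$ for all $\ell\notin I$, then use that the $e=s-(e-1)$ vectors $b_\ell$ with $\ell\notin I$ span $B$ to conclude $\varphi_I=0$, and squareness turns injectivity into invertibility.
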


\begin{proof}
Let $w$ be a vector in the kernel of $\Kos{e-1}{T}$, which we can write as
\[
 w=\sum_{|I|=e-1}z_I\otimes\varphi_I,\qquad\varphi_I\in B^*.
\]
For an $e$-subset $J \subseteq [s]$, write $J=\{j_1 < \cdots < j_e\}$. By \Cref{eq:koszul-definition}, it holds that
\begin{align*}
 0=\Kos{e-1}{T}w
 &=\sum_{|I|=e-1}\sum_{\ell\notin I}
   (z_\ell\wedge z_I)\otimes\varphi_I(b_\ell)c_\ell\\
 &=\sum_{|J|=e}z_J\otimes
   \left(\sum_{r=1}^e(-1)^{r-1}
   \varphi_{J\setminus\{j_r\}}(b_{j_r})c_{j_r}\right),
\end{align*}
so, for every such $J$,
\[
 \sum_{r=1}^e(-1)^{r-1}
 \varphi_{J\setminus\{j_r\}}(b_{j_r})c_{j_r}=0.
\]
Independence of the $c_{j_r}$ gives
$\varphi_I(b_j)=0$ for every $(e-1)$-subset $I\subseteq[s]$ and $j\notin I$.
For each $I$, these are $e$ independent vectors $b_j$, so
$\varphi_I=0$. Thus $\Kos{e-1}{T}$ is injective, and hence invertible.
\end{proof}

\begin{definition}\label{def:local-evaluation}
Let $f_1,\dots, f_s \in \Poly{e^2}$ be monic polynomials of distinct degrees.
Using the standard bases of
$\CC^s\otimes\CC^e\otimes\CC^e$, define
\[
 Q_f(t):=
 \sum_{\ell=1}^s\sum_{u,v=0}^{e-1}
 f_\ell(2^{u+v}t)\,z_\ell\otimes b_u\otimes c_v,
 \qquad
 M_f(t):=\Kos{e-1}{Q_f(t)}.
\]
\end{definition}

For our explicit construction $T_{e,m} \in A \otimes B \otimes C$ below, we will establish that for any restriction $T_{e,m}|_{W}$ of appropriate dimension, the resulting Koszul matrix contains a block upper triangular submatrix with diagonal blocks given by the matrices $M_f(t)$. The following lemma gives us a method to lower bound the rank of this block matrix, and hence the border rank of $T_{e,m}|_{W}$, allowing us to apply the border substitution method to $T_{e,m}$. We note that the degree counting argument used in the proof of the following lemma is similar to the one used in the proof of~\cite[Theorem~14]{GK2016}.

\begin{lemma}\label{lem:local-nonzero}
For $f$ as in \Cref{def:local-evaluation} and any distinct points
$t_0,\ldots,t_{L-1}\in\CC$,
\begin{equation}\label{eq:local-kernel-dimension}
 \sum_{j=0}^{L-1}\dim\ker M_f(t_j)\le e\binom{s}{e-1}(e^2-1).
\end{equation}
\end{lemma}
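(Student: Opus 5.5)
The plan is a degree-counting argument for the polynomial $D(t):=\det M_f(t)$. Write $N=e\binom{s}{e-1}$ for the size of the square matrix $M_f(t)$.

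\textbf{Step 1: degree bound.} The Koszul flattening is linear in the tensor. Each coordinate of $Q_f(t)$ is $f_\ell(2^{u+v}t)$, a polynomial in $t$ of degree at most $e^2-1$. So every entry of $M_f(t)$ is a polynomial of degree at most $e^2-1$, and $D(t)$ is a polynomial of degree at most $N(e^2-1)$.

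\textbf{Step 2: kernel dimension bounds vanishing order.} I will show that $\dim\ker M_f(t_j)\le \operatorname{ord}_{t_j}D$. Let $k=\dim\ker M_f(t_j)$. Choose a constant invertible matrix $P$ whose first $k$ columns span $\ker M_f(t_j)$. Then the first $k$ columns of $M_f(t)P$ vanish at $t=t_j$, so each is divisible by $(t-t_j)$. Hence $(t-t_j)^k$ divides $\det(M_f(t)P)=D(t)\det P$.

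\textbf{Step 3: conclude, given $D\not\equiv0$.} The points $t_j$ are distinct, so summing Step 2 over $j$ gives $\sum_j\dim\ker M_f(t_j)\le\sum_j\operatorname{ord}_{t_j}D$. If $D$ is not identically zero, the right side is at most $\deg D\le N(e^2-1)$ by Step 1, which is \Cref{eq:local-kernel-dimension}.

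\textbf{Step 4 (main obstacle): $D\not\equiv 0$.} Here I use the hypothesis that the $f_\ell$ are monic of distinct degrees $d_\ell$, via a rescaling and a limit $t\to\infty$.

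For $t\neq0$, let $g_t\in\GL(\CC^s)$ act by $z_\ell\mapsto t^{-d_\ell}z_\ell$. The Koszul flattening is equivariant:
\[
\Kos{e-1}{(g_t\otimes\id\otimes\id)Q}=(\Lambda^{e}g_t\otimes\id)\,\Kos{e-1}{Q}\,(\Lambda^{e-1}g_t\otimes\id)^{-1}.
\]
Hence $\det\Kos{e-1}{(g_t\otimes\id\otimes\id)Q_f(t)}$ equals $D(t)$ times a nonzero power of $t$.

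Now look at the rescaled tensor entrywise. Its $(\ell,u,v)$ coefficient is $t^{-d_\ell}f_\ell(2^{u+v}t)$. Since $f_\ell$ is monic of degree $d_\ell$, this tends to $2^{(u+v)d_\ell}$ as $t\to\infty$. So the rescaled tensor converges to
\[
T_\infty=\sum_{\ell=1}^s z_\ell\otimes v_e(2^{d_\ell})\otimes v_e(2^{d_\ell}),
\]
using the standard bases $b_u$ and $c_v$.

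The nodes $2^{d_\ell}$ are pairwise distinct. By \Cref{eq:vandermonde-determinant}, any $e$ of the vectors $v_e(2^{d_\ell})$ are linearly independent. So \Cref{lem:general-position} shows $\Kos{e-1}{T_\infty}$ is invertible.

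Determinants are continuous, so the rescaled determinant is nonzero for all sufficiently large $t$. Therefore $D(t)\ne0$ for large $t$, and $D\not\equiv0$, which completes the argument.
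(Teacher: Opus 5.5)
Your proof is correct and matches the paper's argument: the same rescaling $z_\ell\mapsto t^{-d_\ell}z_\ell$ with $t\to\infty$ combined with \Cref{lem:general-position} to show $\det M_f(t)\not\equiv 0$, the same entrywise degree bound $e\binom{s}{e-1}(e^2-1)$, and the same divisibility $(t-t_j)^{\dim\ker M_f(t_j)}\mid\det M_f(t)$ at each point. The only differences are cosmetic. You clear columns using a kernel basis where the paper row-reduces, and you write out the equivariance of the Koszul flattening, which the paper uses without stating it.
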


\begin{proof}
Put $d_\ell=\deg f_\ell$ and
$v_e(x)=(1,x,\ldots,x^{e-1})^{\mathsf T}$.
For $t\ne0$, let $G_t \in \GL(\CC^s)$ be defined by $G_tz_\ell=t^{-d_\ell}z_\ell$. Then
\[
 (G_t\otimes\id\otimes\id)Q_f(t)
 \longrightarrow
 \sum_{\ell=1}^s
 z_\ell\otimes v_e(2^{d_\ell})\otimes v_e(2^{d_\ell})
 \qquad(t\to+\infty).
\]
Every $e$-subset of the vectors $v_e(2^{d_\ell})$ is linearly independent by the
Vandermonde determinant. Since invertible matrices on the factors do not change the ranks of the Koszul flattenings, \Cref{lem:general-position} and continuity shows that
$g(t):=\det M_f(t)$ is nonzero for $t$ large enough, hence $g(t)$ is not identically zero.
By \Cref{eq:koszul-definition}, the entry of $M_f(t)$ in row
$z_J\otimes c_v$ and column $z_I\otimes b_u^*$ is zero unless
$J=I\cup\{\ell\}$ for some $\ell\notin I$; in that case it is
$\pm f_\ell(2^{u+v}t)$. Thus every nonzero entry of $M_f(t)$ has
degree $d_\ell\le e^2-1$. Each term in the determinant expansion is
a product of $e\binom{s}{e-1}$ matrix entries, so
$\deg g\le e\binom{s}{e-1}(e^2-1)$.

Fix $a\in\CC$ and put $r=\dim\ker M_f(a)$.
Recalling that $M_f(a)$ is a square matrix, Gaussian elimination on
the numerical matrix $M_f(a)$ produces $r$ zero
rows. These row operations are encoded by an invertible matrix $P$ over
$\CC$, which may depend on $a$ but is independent of $t$.
Apply the same row operations to the polynomial matrix $M_f(t)$.
Every entry in the corresponding $r$ rows of $PM_f(t)$ is a polynomial
that vanishes at $t=a$, and hence is divisible by $t-a$.
Factoring $t-a$ from each of these rows shows that $(t-a)^r$ divides
$\det(PM_f(t))=\det(P)g(t)$. Since $\det(P)$ is a nonzero constant,
$(t-a)^r$ also divides $g(t)$.
Applying this at the distinct points $t_j$ gives
\[
 \sum_{j=0}^{L-1}\dim\ker M_f(t_j)
 \le\deg g\le e\binom{s}{e-1}(e^2-1).
\]
This completes the proof.
\end{proof}

\Needspace{8\baselineskip}
\subsection{The tensor and its border rank}

For an integer $m\ge e^2$, let $A,B,C$ be complex vector spaces with bases
\[
 \begin{aligned}
 A &: \quad a_{i,\alpha}
       &&(0\le i<m,\ 0\le\alpha<e^2),\\
 B &: \quad b_{j,u}
       &&(0\le j<me,\ 0\le u<e),\\
 C &: \quad c_{k,v}
       &&(0\le k<me,\ 0\le v<e).
 \end{aligned}
\]
All three spaces have dimension $me^2$.
Define $T_{e,m}\in A\otimes B\otimes C$ by
\begin{equation}\label{eq:tensor-entry}
 (T_{e,m})_{(i,\alpha),(j,u),(k,v)}
 =
 \begin{cases}
  2^{\alpha(j+u+v)},&k=i+j,\\
  0,&k\ne i+j.
 \end{cases}
\end{equation}
Write $A_i=\Span\{a_{i,\alpha}:0\le\alpha<e^2\}$,
$B_j=\Span\{b_{j,u}:0\le u<e\}$, and
$C_k=\Span\{c_{k,v}:0\le v<e\}$.

As we will be applying the border substitution method to $T_{e,m}$ (see~\Cref{lem:substitution}), we first need to verify that it is $A$-concise. For fixed \(i\), the \(A_i\)-slices are supported only on the blocks \(B_j\otimes C_k\) with \(k-j=i\). Hence the supports corresponding to distinct values of \(i\) are disjoint, so it suffices to prove that the restriction $T_{e,m}|_{A_i^*}$ is injective for each $i$. For each $i$, taking $u=v=0$ and $0\le j<e^2$
gives the invertible Vandermonde matrix
$[2^{\alpha j}]_{0\le j,\alpha<e^2}$. Since the columns of this matrix are coordinates of the slices $T_{e,m}(a_{i,0}^*),\dots, T_{e,m}(a_{i,e^2-1}^*)$, these slices are linearly independent, so $T_{e,m}|_{A_i^*}$ is injective. These entries exist since $i+j\le m+e^2-2<me$ when $0\le j<e^2$.

The $A$-conciseness of $T_{e,m}$, together with the following lemma, will allow us to apply the border substitution method to $T_{e,m}$.

\begin{lemma}\label{lem:uniform-restrictions}
For every $ms$-dimensional subspace $W\subseteq A^*$,
\[
 \BR(T_{e,m}|_W)\ge s(me-m-e^2).
\]
\end{lemma}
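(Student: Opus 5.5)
The plan is to pass from $T_{e,m}|_W$ to a further restriction to an $s$-dimensional subspace $W'\subseteq W$. Restricting the first factor is a linear projection, so it does not increase border rank. I then bound $\BR(T_{e,m}|_{W'})$ from below by \Cref{prop:koszul-bound} with $p=e-1$ and $n_A=s$. The goal is to choose $W'$ so that $\Kos{e-1}{T_{e,m}|_{W'}}$ is block upper triangular with diagonal blocks $M_f(2^j)$ from \Cref{def:local-evaluation}, and then to control the total kernel dimension of these blocks by \Cref{lem:local-nonzero}.

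\emph{Step 1 (pigeonhole).} Put $A^*_{\ge i}=\Span\{a_{i',\alpha}^*: i'\ge i\}$ and consider the filtration $W=W\cap A^*_{\ge0}\supseteq W\cap A^*_{\ge1}\supseteq\cdots\supseteq W\cap A^*_{\ge m}=0$. The dimension falls from $ms$ to $0$ in $m$ steps, so some step $i\in\{0,\ldots,m-1\}$ drops by at least $s$. Equivalently, the projection $\pi_i$ of $W\cap A^*_{\ge i}$ onto the coordinates $A_i^*$ has image of dimension at least $s$.

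Identify a form $\sum_\alpha\lambda_\alpha a_{i,\alpha}^*$ with the polynomial $\sum_\alpha\lambda_\alpha x^\alpha\in\Poly{e^2}$. Row reduction lets me pick an $s$-dimensional subspace of the image with a basis of monic polynomials $f_1,\ldots,f_s$ of distinct degrees; this is possible since $s\le e^2$. I lift this basis to vectors $w_1,\ldots,w_s\in W\cap A^*_{\ge i}$ and set $W'=\Span\{w_\ell\}$, with dual basis $z_\ell$.

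\emph{Step 2 (block structure).} By \eqref{eq:tensor-entry}, a form in $A^*_{\ge i}$ contracts $T_{e,m}$ into blocks $B_j\otimes C_k$ with $k-j\ge i$. The part on $k=j+i$ comes only from the $A_i^*$-coordinates. Its entries are
\[
 \sum_\alpha\lambda_\alpha 2^{\alpha(j+u+v)}=f(2^{u+v}\cdot 2^j).
\]
So the component of $T_{e,m}|_{W'}$ on $B_j\otimes C_{j+i}$ is exactly $Q_f(2^j)$.

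In $\Kos{e-1}{\cdot}$, order the column blocks $\Lambda^{e-1}W'^*\otimes B_j^*$ by $j$ and the row blocks $\Lambda^{e}W'^*\otimes C_k$ by $k$. I pair column block $j$ with row block $k=j+i$ for $0\le j<me-i$; the remaining rows can be discarded, since dropping rows does not increase rank. Every nonzero block then has $k\ge j+i$, so the resulting square submatrix is block upper triangular with diagonal blocks $M_f(2^j)$. Its rank is therefore at least $\sum_j\rank M_f(2^j)$.

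\emph{Step 3 (counting).} By \Cref{lem:local-nonzero}, applied at the distinct points $t_j=2^j$,
\[
 \rank\Kos{e-1}{T_{e,m}|_{W'}}\ge e\binom{s}{e-1}\bigl((me-i)-(e^2-1)\bigr).
\]
I divide by $\binom{s-1}{e-1}=\binom{2e-2}{e-1}$, using $e\binom{2e-1}{e-1}/\binom{2e-2}{e-1}=2e-1=s$. This gives $\BR(T_{e,m}|_W)\ge s(me-i-e^2+1)\ge s(me-m-e^2)$, because $i\le m-1$.

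The main obstacle is Step 2. I must check carefully that the contributions of the $A_{i'}^*$-coordinates for $i'>i$ land strictly above the block diagonal, and that the Koszul sign and ordering conventions make the diagonal blocks literally equal to $M_f(2^j)$ rather than to something merely similar to it. I also need to confirm that $j+i<me$ for every $j$ used, so that the blocks $C_{j+i}$ exist.
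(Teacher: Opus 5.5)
Your proof is correct and follows the paper's argument essentially step for step: a pigeonhole argument on a block filtration of $A^*$, echelon normalization to monic $f_\ell$ of distinct degrees, a block-triangular submatrix of $\Kos{e-1}{T_{e,m}|_{W'}}$ with diagonal blocks $M_f(2^j)$ for $0\le j<me-i$, and then \Cref{lem:local-nonzero} and \Cref{prop:koszul-bound}. The only difference is that you filter by $\bigoplus_{i'\ge i}A_{i'}^*$ instead of the paper's $F_i=\bigoplus_{i'\le i}A_{i'}^*$; with rows and columns both ordered by increasing $j$, this makes your retained matrix block lower triangular rather than upper triangular, which does not affect the rank bound.
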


\begin{proof}
Let $F_i=\bigoplus_{i'\le i}A_{i'}^*$, with the convention $F_{-1}=0$.
Since $F_{m-1}=A^*$ and $\dim W=ms$, we have
\[
 \sum_{i=0}^{m-1}
 \bigl(\dim(W\cap F_i)-\dim(W\cap F_{i-1})\bigr)=ms.
\]
By the pigeonhole principle, there exists $i$ such that
$\dim(W\cap F_i)-\dim(W\cap F_{i-1})\ge s$.
Fix such an $i$. The coordinate projection $\pi_i:F_i\to A_i^*$
restricts to a map on
$W\cap F_i$ with kernel $W\cap F_{i-1}$. By rank-nullity, its
image has dimension at least $s$. Thus we can choose
$\varphi_1,\ldots,\varphi_s\in W\cap F_i$ whose projections to $A_i^*$
are linearly independent.

For a nonzero vector $v\in A_i^*$, let
$\iota(v)=\max\{\alpha:v(a_{i,\alpha})\ne0\}\in\{0,1,\ldots,e^2-1\}$
be the index of its last nonzero entry.
Since $\pi_i(\varphi_1),\ldots,\pi_i(\varphi_s)$ are linearly
independent, there is an invertible matrix $P\in \Mat_s(\CC)$
such that, putting $\psi_\ell=\sum_{q=1}^sP_{\ell q}\varphi_q$,
the last nonzero entry of each $\pi_i(\psi_\ell)$ equals one and
\[
 \iota(\pi_i(\psi_1))<\cdots<\iota(\pi_i(\psi_s)).
\]
Replace $\varphi_\ell$ by $\psi_\ell$; these covectors still lie in
$W\cap F_i$ and project to linearly independent vectors in $A_i^*$.
Furthermore, under this choice, the polynomials
\[
 f_\ell(t):=
 \sum_{\alpha=0}^{e^2-1}\varphi_\ell(a_{i,\alpha})t^\alpha
 \qquad(1\le\ell\le s)
\]
are monic of distinct degrees.

Let $E^*=\Span\{\varphi_1,\ldots,\varphi_s\}$ and let
$z_1,\ldots,z_s\in E$ be the dual basis to
$\varphi_1,\ldots,\varphi_s$. Put $S=T_{e,m}|_{E^*}$.
Recall that $\varphi_\ell\in F_i=\bigoplus_{i'\le i}A_{i'}^*$.
For $0\le j<me-i$, \Cref{eq:tensor-entry} shows that contraction
by the $A_{i'}^*$ component of $\varphi_\ell$ maps $B_j^*$ into
$C_{j+i'}$. Indeed, every nonzero entry of $T_{e,m}$ with first
block index $i'$ and second block index $j$ has third block index
$k=j+i'$. Hence, the contraction
$S(\varphi_\ell):B^*\to C$ satisfies
\[
 S(\varphi_\ell)(B_j^*)\subseteq\bigoplus_{i'=0}^i C_{j+i'}.
\]

Let $K$ be the matrix obtained from $\Kos{e-1}{S}$ by retaining the column blocks
$\Lambda^{e-1}E\otimes B_j^*$ and row blocks
$\Lambda^eE\otimes C_{j+i}$ for $0\le j<me-i$, discarding all
other block columns and rows. Order both sets of retained blocks
by increasing $j$.
For $\omega\in\Lambda^{e-1}E$ and $0\le u<e$,
\Cref{eq:koszul-definition} gives
\[
 \Kos{e-1}{S}(\omega\otimes b_{j,u}^*)
 =\sum_{\ell=1}^s(z_\ell\wedge\omega)
       \otimes S(\varphi_\ell)(b_{j,u}^*).
\]
Thus a contribution from the $A_{i'}^*$ component of a covector
can reach the retained row block indexed by $j'$ only if
\[
 j'+i=j+i',\qquad\text{hence }j'\le j.
\]
In particular, every block of $K$ below the diagonal, where $j'>j$, is
zero. This proves that $K$ is block upper triangular.

By \Cref{eq:tensor-entry} and the fact that $\varphi_\ell\in F_i$, we have
\[
 S(\varphi_\ell)(b_{j,u}^*)
 =\sum_{i'=0}^{i}\sum_{v'=0}^{e-1}
 \left(\sum_{\alpha=0}^{e^2-1}
 \varphi_\ell(a_{i',\alpha})2^{\alpha(j+u+v')}\right)c_{j+i',v'}.
\]
Hence, the coefficient of $c_{j+i,v}$ is given by
\[
 \sum_{\alpha=0}^{e^2-1}\varphi_\ell(a_{i,\alpha})
       2^{\alpha(j+u+v)}
 =f_\ell(2^{j+u+v})=f_\ell(2^{u+v}2^j).
\]
Identifying $B_j$ and $C_{j+i}$ with two copies of $\CC^e$
by $b_{j,u}\mapsto b_u$ and $c_{j+i,v}\mapsto c_v$, the projection
of $S$ onto $E\otimes B_j\otimes C_{j+i}$ is thus $Q_f(2^j)$ (see \Cref{def:local-evaluation}).
Hence, the diagonal block of $K$ indexed by $j$ is
$\Kos{e-1}{Q_f(2^j)}=M_f(2^j)$.

Since a block upper triangular matrix has rank at least the sum
of the ranks of its diagonal blocks, \Cref{lem:local-nonzero} gives
\[
 \begin{aligned}
 \rank\Kos{e-1}{S}
 &\ge\rank K\ge\sum_{j=0}^{me-i-1}\rank M_f(2^j)\\
 &=e\binom{s}{e-1}(me-i)-\sum_{j=0}^{me-i-1}\dim\ker M_f(2^j)\\
 &\ge e\binom{s}{e-1}(me-i-e^2+1).
 \end{aligned}
\]
Restriction cannot increase border rank, so
\Cref{prop:koszul-bound} yields
\[
 \BR(T_{e,m}|_W)
 \ge\BR(S) \geq \frac{\rank \Kos{e-1}{S}}{\binom{s-1}{e-1}}
 =\frac{s}{e\binom{s}{e-1}}\rank\Kos{e-1}{S}
 \ge s(me-m-e^2),
\]
using $i<m$.
\end{proof}

\begin{theorem}[\Cref{thm:explicit} in more detail]\label{thm:uniform-bound}
For $n\ge16$, set
\begin{equation}\label{eq:uniform-parameters}
 e=\lfloor n^{1/4}\rfloor,
 \qquad
 m=\left\lfloor\frac{n}{e^2}\right\rfloor,
\end{equation}
and obtain $T_n\in(\CC^n)^{\otimes3}$ by zero-padding
$T_{e,m}$. Then
\[
 \BR(T_n)\ge3n-11n^{3/4}.
\]
\end{theorem}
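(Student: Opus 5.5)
We combine \Cref{lem:substitution} with \Cref{lem:uniform-restrictions}, applied to $T_{e,m}$ with $N:=me^2$ and $q=ms$. First we check the hypotheses. For $n\ge16$ we have $e\ge2$. We need $m\ge e^2$: since $e^4\le n$, we get $n/e^2\ge e^2$, so $m\ge e^2$. Also $q=ms=m(2e-1)\le me^2=N$, so $1\le q\le N$. The tensor $T_{e,m}$ was shown to be $A$-concise. \Cref{lem:substitution} together with \Cref{lem:uniform-restrictions} then gives
\[
 \BR(T_{e,m})\ge N-ms+s(me-m-e^2)=me^2-ms+sme-sm-se^2 .
\]
Substituting $s=2e-1$, we have $sme=2me^2-me$ and $2ms=4me-2m$. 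The bound becomes
\[
 \BR(T_{e,m})\ge 3me^2-5me+2m-(2e-1)e^2\ge 3me^2-5me-2e^3 .
\]

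Next we transfer this to $T_n$. Zero-padding does not change border rank: $T_{e,m}$ is a restriction of $T_n$ and conversely (by projection), and $N=me^2\le n$. Hence $\BR(T_n)=\BR(T_{e,m})$.

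It remains to do the arithmetic, which is the only place needing care with constants. Since $m>n/e^2-1$, we have $me^2> n-e^2$, so $3me^2\ge 3n-3e^2$. Using $m\le n/e^2$ and $e\le n^{1/4}$, we get $5me\le 5n/e$. We also have $e>n^{1/4}-1\ge n^{1/4}/2$ for $n\ge16$, and more sharply $e\ge \tfrac{2}{3}n^{1/4}$ can fail only for small ranges. The simplest route is to write $n/e\le n^{3/4}\cdot n^{1/4}/e$ and use $n^{1/4}/e<(e+1)/e\le 3/2$, which gives $5me\le 7.5\,n^{3/4}$. The remaining terms satisfy $2e^3\le 2n^{3/4}$ and $3e^2\le 3n^{1/2}\le 1.5\,n^{3/4}$, the last because $n^{1/4}\ge2$. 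Summing the losses gives $7.5+2+1.5=11$, so
\[
 \BR(T_n)\ge 3n-11n^{3/4}.
\]

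The main obstacle is fitting the error terms within the constant $11$. If the crude bound $n^{1/4}/e\le 3/2$ were too lossy, the fallback is to retain the dropped term $+2m$, or to refine the estimate at small $e$. As computed above, however, the constants add up exactly to $11$. Explicitness of $T_n$ is as stated in the overview, since every nonzero entry is $2^r$ with $r$ polynomial in $n$.
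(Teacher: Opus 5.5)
Your proof is correct and follows the paper's argument: border substitution with $q=ms$ combined with the uniform restriction bound gives $\BR(T_{e,m})\ge 3me^2-5me-2e^3$, and the estimates $me^2>n-e^2$, $me\le n/e$, $n^{1/4}/e<3/2$ then give the constant $11$. Bounding $3e^2\le\frac32 n^{3/4}$ on its own instead of via $2e^3+3e^2\le\frac72e^3$ makes no difference. Your hedge that $e\ge\frac23 n^{1/4}$ ``can fail only for small ranges'' is wrong but harmless: since $e\ge2$, it always holds, and it is exactly the bound you go on to use.
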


\begin{proof}
Since $n\ge16$, we have $e\ge2$ and $n\ge e^4$, hence
$m\ge e^2$. Moreover, $s=2e-1\le e^2$, so $q=ms\le me^2=\dim A$. Recall also that $T_{e,m}$ is $A$-concise. Hence, we can apply \Cref{lem:substitution,lem:uniform-restrictions}
with $q=ms$ to obtain
\begin{equation}\label{eq:finite-simple}
 \begin{aligned}
 \BR(T_{e,m})
 &\ge me^2-ms+s(me-m-e^2)\\
 &=3me^2-5me-2e^3+2m+e^2\\
 &\ge3me^2-5me-2e^3.
 \end{aligned}
\end{equation}
Zero-padding preserves border rank, so this inequality, together
with $me^2\ge n-e^2$ and $me\le n/e$, gives
\begin{equation}\label{eq:uniform-error}
 \BR(T_n)\ge3n-\frac{5n}{e}-2e^3-3e^2.
\end{equation}
Since $n^{1/4}<e+1\le3e/2$,
\[
 \frac{5n}{e}\le\frac{15}{2}n^{3/4},
 \qquad
 2e^3+3e^2\le\frac72e^3\le\frac72n^{3/4},
\]
proving the bound.
\end{proof}

\section{Proof of the flattening barrier}
\label{sec:barrier-tensor}

In this section we prove the upper bound $\Pi(n) \leq 2n-2+\frac{2}{n+1}$ appearing in~\Cref{thm:barrier}. As described in~\Cref{sec:overview}, we need to prove that the matrices defined in~\Cref{eq:HK} are equal to the identity matrix, and to establish~\Cref{eq:moment_bounds}. We prove these statements in two lemmas.

\begin{lemma}\label{lem:tensor-alt-balance}
Let $L:(\CC^n)^{\otimes3}\to\Mat_{u,v}(\CC)$ be nonzero, and set
$r=\max_T\rank L(T)$ and $\rho=\rho(L)$.
There are matrices $M_{ijk}\in\Mat_r(\CC)$ satisfying
\begin{equation}\label{eq:tensor-alt-balance}
 \sum_{ijk}M_{ijk}M_{ijk}^{\dagger}
 =\sum_{ijk}M_{ijk}^{\dagger}M_{ijk}=I_r
\end{equation}
and
\begin{equation}\label{eq:tensor-alt-rank}
 \rank\left(\sum_{ijk}x_i y_j z_kM_{ijk}\right)\le\rho
 \qquad\text{for all }x,y,z\in\CC^n.
\end{equation}
\end{lemma}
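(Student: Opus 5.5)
The plan is to first cut $L$ down to an $r\times r$ flattening that is generically invertible, and then balance it with a Kempf--Ness/operator-scaling argument for the action of $\mathrm{SL}_r\times\mathrm{SL}_r$ by left and right multiplication.

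\emph{Step 1 (reduction to square size $r$).} Choose $T_0$ with $\rank L(T_0)=r$. Then choose $P\in\Mat_{r,u}(\CC)$ and $Q\in\Mat_{v,r}(\CC)$, selecting $r$ rows and $r$ columns, so that $PL(T_0)Q$ is invertible. Set $A_{ijk}=P\,L(e_i\otimes e_j\otimes e_k)\,Q\in\Mat_r(\CC)$. Since $PL(\cdot)Q$ is linear, $\sum_{ijk}x_iy_jz_kA_{ijk}=P\,L(x\otimes y\otimes z)\,Q$, and this has rank at most $\rho$. Moreover, the polynomial $\det\bigl(\sum_{ijk}T_{ijk}A_{ijk}\bigr)$ in the variables $T$ is nonzero, since it does not vanish at $T_0$.

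\emph{Step 2 (group action and invariant).} Let $G=\mathrm{SL}_r\times\mathrm{SL}_r$ act on tuples $\mathcal A=(A_{ijk})$ by $(X,Y)\cdot\mathcal A=(XA_{ijk}Y^{-1})$. Two conditions are closed and $G$-stable:
\begin{itemize}
\item every $\sum x_iy_jz_kA_{ijk}$ has rank $\le\rho$;
\item the polynomial $\det\bigl(\sum T_{ijk}A_{ijk}\bigr)$ in $T$ is nonzero.
\end{itemize}
The second holds because $\det(XAY^{-1})=\det A$. More precisely, the coefficients of this determinant, viewed as a polynomial in $T$, are $G$-invariant polynomial functions of $\mathcal A$, and one of them is nonzero at $\mathcal A$. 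Hence $0\notin\overline{G\cdot\mathcal A}$. The orbit closure contains a unique closed orbit $G\cdot\mathcal B$. Since the invariants are constant on $\overline{G\cdot\mathcal A}$, $\mathcal B$ still has a nonzero determinant polynomial, so $\mathcal B\neq0$. Since the rank condition is closed, $\mathcal B$ still satisfies \Cref{eq:tensor-alt-rank}. This is the point to be careful about: we only pass to the orbit closure, not the orbit, so only closed and invariant properties survive, and both properties we need are of that kind.

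\emph{Step 3 (Kempf--Ness gives the balancing).} By the Kempf--Ness theorem, a closed orbit contains a vector of minimal norm $\sum\|B_{ijk}\|_2^2$, and at that point the moment map vanishes. Differentiating $\|(e^{tX},e^{-tY})\cdot\mathcal B\|^2$ at $t=0$ for traceless Hermitian $X,Y$ gives the conditions $\tr\bigl(X\sum BB^\dagger\bigr)=0$ and $\tr\bigl(Y\sum B^\dagger B\bigr)=0$. Hence $\sum B_{ijk}B_{ijk}^\dagger=cI_r$ and $\sum B_{ijk}^\dagger B_{ijk}=c'I_r$. Taking traces gives $rc=rc'=\sum\|B_{ijk}\|_2^2>0$, so $c=c'>0$. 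Finally set $M_{ijk}=c^{-1/2}B_{ijk}$. This rescaling preserves \Cref{eq:tensor-alt-rank} and yields \Cref{eq:tensor-alt-balance}.

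I expect the only delicate point to be in Step 2: making sure that passing to the orbit closure keeps both the rank bound and the nondegeneracy. The first is closed and the second is detected by nonzero $G$-invariants. Everything else is the standard moment-map computation.
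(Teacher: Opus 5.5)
Your proof is correct, but it follows a different route from the paper's. You pass to the $\mathrm{SL}_r\times\mathrm{SL}_r$ orbit closure and use the invariant $\det\bigl(\sum_{ijk}(T_0)_{ijk}A_{ijk}\bigr)$ to keep $0$ out of it. You then read off $\sum_{ijk}B_{ijk}B_{ijk}^\dagger=\sum_{ijk}B_{ijk}^\dagger B_{ijk}=cI_r$ from the vanishing of the moment map at a norm minimizer. This is essentially the Kempf--Ness alternative that the paper mentions right after the statement but does not use.

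The paper's own proof works in the dual direction, by an extremal argument:
\begin{enumerate}
\item It maximizes $\bigl|\det\bigl(\sum_{ijk}(T_*)_{ijk}M_{ijk}\bigr)\bigr|$ over the compact set of tuples with $\sum_{ijk}\|M_{ijk}\|_2^2\le r$ that satisfy the rank constraint.
\item It compares the maximizer with the rescaled tuples $H^{-1/2}M_{ijk}$ and $M_{ijk}K^{-1/2}$. Arbitrary invertible rescalings are allowed here because they preserve the rank bound, and they multiply the absolute determinant by $(\det H)^{-1/2}$ or $(\det K)^{-1/2}$.
\item Maximality gives $\det H,\det K\ge1$. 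AM--GM together with $\tr H=\tr K\le r$ then forces $H=K=I_r$.
\end{enumerate}

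What each buys: the paper's route needs only compactness and AM--GM, with no invariant theory. Yours exhibits the balanced tuple as a minimal-norm point of an orbit closure, which is the standard operator-scaling picture. Incidentally, you do not need closed orbits at all. The Euclidean closure of $G\cdot\mathcal A$ is closed and $G$-stable, and it excludes $0$ by continuity of the invariant. So a norm minimizer over it exists, and it is automatically a norm minimizer on its own orbit, which is all your first-order computation uses. This makes your argument as elementary as the paper's.
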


We provide a self-contained, elementary proof of this lemma, although we note that it can also be proven as a consequence of the Kempf--Ness theorem for tensors presented in \cite[Theorem 4]{BGGOW2018}. To see this, project $L$ to a nonzero map $S:(\CC^n)^{\otimes3} \rightarrow \Mat_r(\CC)$ for which $S(T_*)$ has full rank for some $T_*$.  Viewing this map as a third-order $n^3 \times r \times r$ tensor, the fact that $S(T_*)$ has full rank implies that $S$ does not lie in the null cone of $\mathrm{SL}_r \times \mathrm{SL}_r$.  Hence, by \cite[Theorem 4]{BGGOW2018} there exists a tensor in its orbit closure with the properties given in~\Cref{eq:tensor-alt-balance} (up to rescaling). The rank constraint in~\Cref{eq:tensor-alt-rank} survives because matrix rank is invariant under $\mathrm{SL}_r \times \mathrm{SL}_r$, and the rank constraint is a closed condition.

\begin{proof}
Choose $T_*$ with $\rank L(T_*)=r$.
By post-composing $L$ with the row and column projections onto the
coordinates of an invertible $r\times r$ submatrix of $L(T_*)$,
we obtain matrices $X_{ijk}\in\Mat_r(\CC)$ with
\[
 \det\left(\sum_{ijk}(T_*)_{ijk}X_{ijk}\right)\ne0,
 \qquad
 \rank\left(\sum_{ijk}x_i y_j z_kX_{ijk}\right)\le\rho
\]
for all $x,y,z\in\CC^n$.

Let $\mathcal C$ be the set of all tuples
$(X_{ijk})\in\Mat_r(\CC)^{n^3}$ for which
\[
 \sum_{ijk}\|X_{ijk}\|_2^2\le r,
 \qquad
 \rank\left(\sum_{ijk}x_i y_j z_kX_{ijk}\right)\le\rho
 \quad\text{for all }x,y,z\in\CC^n,
\]
and note that $\mathcal C$ is compact. Choose a tuple $(M_{ijk})$
in $\mathcal C$ maximizing
\[
 \delta=\left|\det\left(\sum_{ijk}(T_*)_{ijk}M_{ijk}\right)\right|.
\]
The maximum is positive, since a sufficiently small nonzero
rescaling of the original tuple belongs to $\mathcal C$
and has a nonzero determinant.

Let $H=\sum_{ijk}M_{ijk}M_{ijk}^{\dagger}$ and note that $\tr(H)\le r$.
This matrix is positive
definite: a vector in its kernel is annihilated by every
$M_{ijk}^{\dagger}$ and therefore by the adjoint of the invertible matrix
$\sum_{ijk}(T_*)_{ijk}M_{ijk}$.
The tuple $\widetilde M_{ijk}=H^{-1/2}M_{ijk}$ also belongs to
$\mathcal C$, because left multiplication preserves the rank
constraints and
\[
 \sum_{ijk}\|\widetilde M_{ijk}\|_2^2
 =\tr(H^{-1/2}HH^{-1/2})=r.
\]
Note also that
\[
 \left|\det\left(\sum_{ijk}(T_*)_{ijk}\widetilde M_{ijk}\right)\right|
 =\frac{\delta}{(\det H)^{1/2}}.
\]
Maximality and $\delta>0$ imply $\det H\ge1$.
On the other hand, the arithmetic--geometric mean inequality for
the eigenvalues of $H$ gives
\[
 1\le\det H
 \le\left(\frac{\tr H}{r}\right)^r
 \le1.
\]
Equality forces every eigenvalue of $H$ to be $1$, so $H=I_r$.

Similarly, $K=\sum_{ijk}M_{ijk}^{\dagger}M_{ijk}$ is positive definite, and a similar argument with the matrices
$(M_{ijk}K^{-1/2})_{ijk}$ gives $K=I_r$.
\end{proof}

\begin{lemma}\label{lem:tensor-alt-moment-bound}
Let $n\ge2$, let $M_{ijk}\in\Mat_r(\CC)$, and let
\[
 H=\sum_{ijk}M_{ijk}M_{ijk}^{\dagger},\qquad
 K=\sum_{ijk}M_{ijk}^{\dagger}M_{ijk}.
\]
Choose $x,y,z$ independently and uniformly on the unit sphere in
$\CC^n$, and set
$M=\sum_{ijk}x_i y_j z_kM_{ijk}$.
Then
\begin{align*}
 \mathbb E\|M\|_2^2=\frac{\tr H}{n^3}, \qquad \mathbb E\|M\|_4^4\le\frac{\|H\|_2^2+\|K\|_2^2}{n^4(n+1)}.
\end{align*}
\end{lemma}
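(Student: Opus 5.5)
The plan is to expand both norms as polynomials in the entries of $x,y,z$ and their conjugates, and then apply the standard moment formulas for the uniform measure on the unit sphere of $\CC^n$:
\[
 \mathbb E[x_i\bar x_{i'}]=\frac{\delta_{ii'}}{n},\qquad
 \mathbb E[x_ix_j\bar x_{i'}\bar x_{j'}]=\frac{\delta_{ii'}\delta_{jj'}+\delta_{ij'}\delta_{ji'}}{n(n+1)}.
\]
Since $x,y,z$ are independent, these expectations factor over the three coordinates. Write $a=(i,j,k)$ for a multi-index and $w_a=x_iy_jz_k$.

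\textbf{Second moment.} We have
\[
 \|M\|_2^2=\tr(MM^\dagger)=\sum_{a,a'}w_a\bar w_{a'}\tr(M_aM_{a'}^\dagger).
\]
Taking expectations factor by factor forces $a=a'$ and contributes $n^{-3}$. This gives $\mathbb E\|M\|_2^2=n^{-3}\sum_a\tr(M_aM_a^\dagger)=\tr H/n^3$.

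\textbf{Fourth moment: expansion.} Expand
\[
 \|M\|_4^4=\tr(MM^\dagger MM^\dagger)=\sum_{a,b,c,d}w_a\bar w_b w_c\bar w_d\,\tr(M_aM_b^\dagger M_cM_d^\dagger).
\]
In each of the three coordinates, the fourth-moment formula pairs the unconjugated index (from $a$ or $c$) with a conjugated one (from $b$ or $d$) in one of two ways: the ``identity'' pairing $a\leftrightarrow b$, $c\leftrightarrow d$, or the ``swap'' pairing $a\leftrightarrow d$, $c\leftrightarrow b$. Hence
\[
 \mathbb E\|M\|_4^4=\frac{1}{n^3(n+1)^3}\sum_{\sigma\in\{\mathrm{id},\mathrm{sw}\}^3}\Theta_\sigma,
\]
where $\Theta_\sigma$ is the sum of $\tr(M_aM_b^\dagger M_cM_d^\dagger)$ with the coordinates of $(a,b,c,d)$ identified according to $\sigma$. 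The two pure pairings are computed directly. For $\sigma=(\mathrm{id},\mathrm{id},\mathrm{id})$ we get $\sum_{a,c}\tr(M_aM_a^\dagger M_cM_c^\dagger)=\tr(H^2)=\|H\|_2^2$. For $\sigma=(\mathrm{sw},\mathrm{sw},\mathrm{sw})$ we get $\sum_{a,b}\tr(M_aM_b^\dagger M_bM_a^\dagger)=\tr(K^2)=\|K\|_2^2$, using $\sum_a M_a^\dagger\,\cdot\,$ and cyclicity.

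\textbf{Mixed pairings (the main obstacle).} It remains to show, for each of the six mixed $\sigma$,
\[
 |\Theta_\sigma|\le\tfrac12\bigl(\|H\|_2^2+\|K\|_2^2\bigr).
\]
Granting this, the total is at most $4(\|H\|_2^2+\|K\|_2^2)/(n^3(n+1)^3)$. Since $4n\le(n+1)^2$, this is at most $(\|H\|_2^2+\|K\|_2^2)/(n^4(n+1))$, as claimed.

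For the mixed bound I would try the following. Split the multi-index as $a=(\alpha,\beta)$, where $\alpha$ collects the coordinates paired by the identity pairing and $\beta$ those paired by the swap. Form the matrices $G_{\alpha\alpha'}=\sum_\beta M_{\alpha\beta}M_{\alpha'\beta}^\dagger$ and $G'_{\alpha\alpha'}=\sum_\beta M_{\alpha\beta}^\dagger M_{\alpha'\beta}$, or a similar ``partially traced'' grouping. The aim is to write $\Theta_\sigma$ as a Hilbert--Schmidt inner product $\langle X,Y\rangle$ of two block operators with $\|X\|_2^2\le\|H\|_2^2$ and $\|Y\|_2^2\le\|K\|_2^2$. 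Cauchy--Schwarz followed by AM--GM then gives the bound.

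The difficulty is choosing this grouping correctly. Concretely, I would rewrite each trace as a contraction of the 4-index tensor $\sum_a M_a\otimes\bar M_a$ against a partial realignment of itself, and check that the realignment is norm-nonincreasing in the relevant sense. The delicate point is ensuring that the norms produced are bounded by $\|H\|_2$ and $\|K\|_2$, and not by something larger such as $\tr H$.
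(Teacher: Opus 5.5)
Your second-moment computation, the expansion $\mathbb E\|M\|_4^4=[n(n+1)]^{-3}\sum_\sigma\Theta_\sigma$, and the evaluation of the two pure terms as $\|H\|_2^2$ and $\|K\|_2^2$ are all correct. The proof fails at the mixed-pairing step, because the bound $\Theta_\sigma\le\frac12(\|H\|_2^2+\|K\|_2^2)$ is false. Take $r=n$, let $E_{ji}$ denote matrix units, and set $M_{ij1}=E_{ji}$ and $M_{ijk}=0$ for $k\ne1$, so that $M=z_1\,yx^{\mathsf T}$. Then $H=K=nI_n$, so $\|H\|_2^2+\|K\|_2^2=2n^3$. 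But for $\sigma=(\mathrm{id},\mathrm{sw},\mathrm{sw})$,
\[
 \Theta_\sigma=\sum_{i,i',j,j'}\tr\bigl(E_{ji}E_{ij'}E_{j'i'}E_{i'j}\bigr)=n^4 .
\]
In this example $\|M\|_4^4=|z_1|^4$, so $\mathbb E\|M\|_4^4=2/(n(n+1))$, which equals the lemma's bound exactly. Your intermediate claim $\sum_\sigma\Theta_\sigma\le4(\|H\|_2^2+\|K\|_2^2)$ would instead give $\mathbb E\|M\|_4^4\le8/(n+1)^3$, which is strictly smaller for every $n\ge2$. There is also a structural reason no $n$-independent bound on the mixed terms can hold. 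Under the normalization $H=K=I_r$ of \Cref{lem:tensor-alt-balance}, it would give $\Pot(L)\le8n^3/(n+1)^3<8$ for every linear flattening, contradicting $\Pi(n)\ge2n-2$ once $n\ge5$.

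The realignment you were looking for is in fact an exact identity. If $I_\sigma\subseteq\{A,B,C\}$ is the set of identity-paired coordinates, then $\Theta_\sigma=\|V_SV_S^\dagger\|_2^2$ with $S=\{Q\}\cup I_\sigma$. Here $V_S$ is the flattening of $V_{\alpha\beta ijk}=(M_{ijk})_{\alpha\beta}$ with the factors in $S$ as rows. In the example, $V=\bigl(\sum_j e_j^P\otimes e_j^B\bigr)\otimes\bigl(\sum_i e_i^Q\otimes e_i^A\bigr)\otimes e_1^C$. So the cut $\{Q,A\}$ versus $\{P,B,C\}$ is unentangled and gives $n^4$, while the cuts computing $H$ and $K$ give only $n^3$.

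What is missing is how to bound these quantities, and they must be bounded jointly. Write $\Lambda=\sum_\sigma\Theta_\sigma$ and $\eta=\|H\|_2^2+\|K\|_2^2$. In the paper's notation $\Theta_\sigma=\langle J_PJ_S\rangle$ with $S$ the swap set. Two facts are available: the complement symmetry $\langle J_QJ_S\rangle=\langle J_PJ_{\{A,B,C\}\setminus S}\rangle$, and $0\preceq J_D\preceq nI$. Applied term by term, they bound each mixed term by $n\|H\|_2^2$ or $n\|K\|_2^2$. Permuting the roles of $A,B,C$ in the example shows each of these individual bounds is attained. Summing them gives only $\Lambda\le(3n+1)\eta$, short of the required $(n+1)^2\eta/n$ by a factor tending to $3$; that corresponds to a potency bound of roughly $6n$ rather than $2n$.

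The paper instead uses the single inequality
\[
\bigl\langle(J_P+J_Q)(nI-J_A)(nI-J_B)(nI-J_C)\bigr\rangle\ge0,
\]
which holds because these operators commute and are positive semidefinite. Expanding it with the complement symmetry gives $(n^3-1)\eta-(n^2-n)(\Lambda-\eta)\ge0$, i.e.\ $\Lambda\le(n+1)^2\eta/n$. This is exactly the lemma, and the example above attains it.
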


We note that the equality is well-known. By contrast, we are not aware of this upper bound on $\mathbb E\|M\|_4^4$ elsewhere in the literature.

\begin{proof}
For $x$ uniform on the unit sphere in $\CC^n$, it holds that
\begin{equation}\label{eq:tensor-alt-moments}
 \mathbb E[xx^{\dagger}]=\frac{I_n}{n},\qquad
 \mathbb E[xx^{\dagger}\otimes xx^{\dagger}]
 =\frac{I_n\otimes I_n+F}{n(n+1)},
\end{equation}
where $F(u\otimes v)=v\otimes u$ interchanges the two factors;
see \cite[Proposition~6]{Harrow2013}.
Taking matrix entries, using
$(I_n\otimes I_n)_{(a,c),(b,d)}=\delta_{ab}\delta_{cd}$ and
$F_{(a,c),(b,d)}=\delta_{ad}\delta_{cb}$, gives
\[
 \mathbb E[x_a\overline{x_b}]=\frac{\delta_{ab}}n,\qquad
 \mathbb E[x_a\overline{x_b}x_c\overline{x_d}]
 =\frac{\delta_{ab}\delta_{cd}+\delta_{ad}\delta_{cb}}{n(n+1)}.
\]
The first identity and independence give
\[
 \begin{aligned}
 \mathbb E\|M\|_2^2
 &=\sum_{\substack{i,j,k\\i',j',k'}}
   \mathbb E[x_i\overline{x_{i'}}]
   \mathbb E[y_j\overline{y_{j'}}]
   \mathbb E[z_k\overline{z_{k'}}]
   \tr\bigl(M_{ijk}M_{i'j'k'}^{\dagger}\bigr)\\
 &=\frac1{n^3}\sum_{ijk}\|M_{ijk}\|_2^2
 =\frac{\tr H}{n^3}.
 \end{aligned}
\]
This proves the equality.

To prove the inequality, encode the coefficient matrices
as a tensor
\[
 V\in P\otimes Q\otimes A\otimes B\otimes C,
 \qquad
 V_{\alpha,\beta,i,j,k}=(M_{ijk})_{\alpha\beta},
\]
where $P,Q$ are copies of $\CC^r$ and $A,B,C$ are copies of
$\CC^n$, equipped with their standard orthonormal bases. We use coordinates to identify all spaces with their duals.
Put $w=V\otimes\overline V$, with conjugation taken entrywise.
For each factor $D$, define
\[
 \Omega_D=\sum_{i=1}^{\dim D}e_i^D\otimes e_i^D,
 \qquad
 J_D=\Omega_D\Omega_D^\dagger,
\]
where $e_i^D$ is a standard basis of $D$. The operator $J_D$ acts on the two copies of $D$ and as the
identity elsewhere. We suppress these identity matrices to ease the notation. We also suppress implicit permutations of the five factors.
Write $\langle X\rangle:=w^\dagger Xw$ and
$J_S=\prod_{D\in S}J_D$ for a subset $S$ of the five factors.

Let $V_S$ be the tensor $V$ viewed as a map from the factors in $S^c$ to the factors in $S$. Then
\begin{equation}\label{eq:tensor-positive-complements}
 \langle J_S\rangle
 =\|V_S^\dagger V_S\|_2^2
 =\|V_SV_S^\dagger\|_2^2
 =\langle J_{S^c}\rangle.
\end{equation}
In particular, note that $V_P V_P^{\dagger}=H$ and $V_QV_Q^\dagger=K^{\mathsf T}$. Hence,
\[
 \begin{aligned}
 \langle J_P\rangle
 &=\|V_PV_P^\dagger\|_2^2=\|H\|_2^2,\\
 \langle J_Q\rangle&=\|K^{\mathsf T}\|_2^2=\|K\|_2^2.
 %
 \end{aligned}
\]
Set
\[
 \eta=\langle J_P+J_Q\rangle=\|H\|_2^2+\|K\|_2^2,
 \qquad
 \Lambda=\left\langle
 J_P(I+J_A)(I+J_B)(I+J_C)
 \right\rangle.
\]

Let
\[
\phi:=\sum_{\alpha,\beta} M_{\alpha,\beta}(x,y,z)  \; e_{\alpha}^P \otimes e_{\beta}^Q \in P\otimes Q
\]
be the vectorization of $M$. We have
\[
 \begin{aligned}
 (\phi\otimes\overline\phi)^\dagger
 J_P(\phi\otimes\overline\phi) =\|M^\dagger M\|_2^2=\|M\|_4^4.
 \end{aligned}
\]
Note that we also have
\[
 \bigl[
 I_{(P\otimes Q)^{\otimes2}}
 \otimes(\overline x\otimes x)^\dagger
 \otimes(\overline y\otimes y)^\dagger
 \otimes(\overline z\otimes z)^\dagger
 \bigr]w
 =\phi\otimes\overline\phi.
\]

For $D\in\{A,B,C\}$ and $u\in D$, put
$R_D(u):=(\overline u\otimes u)(\overline u\otimes u)^\dagger$,
acting on the two copies of $D$.
\Cref{eq:tensor-alt-moments},
with the first factor transposed, gives
\[
 \mathbb E\bigl[
 R_D(u)
 \bigr]
 =\frac{I+J_D}{n(n+1)}.
\]
Hence,
\begin{equation}\label{eq:tensor-positive-moment}
 \begin{aligned}
 \Lambda
 &=[n(n+1)]^3\left\langle
   J_P\,\mathbb E[R_A(x)]\mathbb E[R_B(y)]\mathbb E[R_C(z)]
   \right\rangle\\
 &=[n(n+1)]^3\,\mathbb E\left[
   (\phi\otimes\overline\phi)^\dagger
   J_P(\phi\otimes\overline\phi)\right]\\
 &=[n(n+1)]^3\,\mathbb E\|M\|_4^4.
 \end{aligned}
\end{equation}

The complements of $\{P,A,B,C\}$ and $\{Q,A,B,C\}$ among the five
factors are $\{Q\}$ and $\{P\}$, respectively. Thus
\Cref{eq:tensor-positive-complements} gives
\[
 \begin{aligned}
 \langle(J_P+J_Q)J_AJ_BJ_C\rangle
 &=\langle J_PJ_AJ_BJ_C\rangle+\langle J_QJ_AJ_BJ_C\rangle\\
 &=\langle J_Q\rangle+\langle J_P\rangle=\eta.
 \end{aligned}
\]

By \Cref{eq:tensor-positive-complements}, for every
$S\subseteq\{A,B,C\}$,
\[
 \langle J_QJ_S\rangle
 =\langle J_PJ_{\{A,B,C\}\setminus S}\rangle.
\]
It follows that
\[
 \begin{aligned}
 \sum_{D\in\{A,B,C\}}\langle(J_P+J_Q)J_D\rangle
 &=\sum_{\substack{S\subseteq\{A,B,C\}\\|S|=2}}
   \langle(J_P+J_Q)J_S\rangle=\Lambda-\eta.
 \end{aligned}
\]
For $D\in\{A,B,C\}$, we have
$J_D=\Omega_D\Omega_D^\dagger$ and $\|\Omega_D\|_2^2=n$,
so $nI-J_D\succeq0$.
These operators and $J_P+J_Q$ are positive semidefinite and
commute, so their product is positive semidefinite.
Expanding its expectation and using the preceding identities gives
\[
 \begin{aligned}
 0
 &\le\left\langle
 (J_P+J_Q)(nI-J_A)(nI-J_B)(nI-J_C)
 \right\rangle\\
 &=(n^3-1)\eta-(n^2-n)(\Lambda-\eta)\\
 &=(n-1)\bigl((n+1)^2\eta-n\Lambda\bigr).
 \end{aligned}
\]
Since $n\ge2$, we obtain $\Lambda\le(n+1)^2\eta/n$.
By \Cref{eq:tensor-positive-moment},
\[
 \mathbb E\|M\|_4^4
 =\frac{\Lambda}{[n(n+1)]^3}
 \le\frac{\eta}{n^4(n+1)}
 =\frac{\|H\|_2^2+\|K\|_2^2}{n^4(n+1)},
\]
as required.
\end{proof}

Putting these ingredients together, the proof is almost immediate.

\begin{proof}[Proof of the upper bound in \Cref{thm:barrier}]
Fix a nonzero $L$, and set $r=\max_T\rank L(T)$ and $\rho=\rho(L)$.
Choose the matrices supplied by \Cref{lem:tensor-alt-balance},
and let $M=\sum_{ijk}x_i y_j z_kM_{ijk}$ for independent uniform
$x,y,z$ on the unit sphere in $\CC^n$.
By \Cref{eq:tensor-alt-balance,lem:tensor-alt-moment-bound},
\[
 \mathbb E\|M\|_2^2=\frac r{n^3},
 \qquad
 \mathbb E \|M\|_4^4 \le\frac{2r}{n^4(n+1)}.
\]
For every $x,y,z$, we have $\rank M\le\rho$.
Cauchy--Schwarz for the nonzero squared singular values of $M$
gives $\|M\|_2^4\le\rho\, \|M\|_4^4$.
Another application of Cauchy--Schwarz therefore yields
\[
 \frac{r^2}{n^6}=(\mathbb E\|M\|_2^2)^2
 \le\mathbb E\|M\|_2^4
 \le\rho\,\mathbb E \|M\|_4^4
 \le\frac{2\rho r}{n^4(n+1)}.
\]
Multiplying by $n^6/(\rho r)>0$ gives $\Pot(L)\le2n^2/(n+1) = 2n-2 + \frac{2}{n+1}$.
Taking the supremum over $L$ proves the upper bound.
\end{proof}

\Needspace{12\baselineskip}
\section{Sharpness of the flattening barrier in odd dimensions}\label{sec:sharpness}

In this section, we prove that the upper bound $\Pi(n)\leq 2n-2+\frac{2}{n+1}$ appearing in~\Cref{thm:barrier} is sharp for all odd $n=2p+1\ge3$, and that this bound is attained by a Koszul flattening. To do this, it suffices to construct a tensor $S_n \in (\CC^n)^{\otimes 3}$ for which $\Kos{p}{S_n}$ is invertible. Indeed, from here it follows from~\Cref{prop:koszul-bound} that
\[
\Pot(\mathcal{K}_p)=\frac{(2p+1) \binom{2p+1}{p}}{\binom{2p}{p}}=2n-2+\frac{2}{n+1}.
\]
The tensor $S_n$ is constructed as follows.
\begin{definition}
\label{def:odd-fifth-tensor}
Let $A=\CC^n$ have basis $a_1,\ldots,a_n$, identify
$B^*=\Poly{n}$, and put $C=\CC^{p+1}\oplus\CC^p$.
For $v_q(t)=(1,t,\ldots,t^{q-1})^{\mathsf T}$, define
\begin{equation}\label{eq:odd-fifth-tensor}
 S_n=\sum_{i=1}^n a_i\otimes X_i,
 \qquad
 X_i(f)=f(i)v_{p+1}(i)\oplus f'(i)v_p(i).
\end{equation}
\end{definition}

\begin{theorem}
\label{thm:odd-fifth-invertibility}
For the tensor $S_n$ introduced in \Cref{def:odd-fifth-tensor},
$\Kos{p}{S_n}$ is invertible.
\end{theorem}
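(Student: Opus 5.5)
Since $\Kos{p}{S_n}$ is a square matrix of size $n\binom np=n\binom n{p+1}$ (because $\dim C=2p+1=n$ and $\binom np=\binom n{p+1}$), it suffices to show that it is injective. We take a kernel vector and write it as
\[
 w=\sum_{|I|=p}a_I\otimes f_I,\qquad f_I\in B^*=\Poly{n}.
\]
Exactly as in the proof of \Cref{lem:general-position}, we expand $\Kos{p}{S_n}w$ in the basis $a_J$ for $(p+1)$-subsets $J=\{j_1<\cdots<j_{p+1}\}$. This gives, for every $J$,
\[
 \sum_{r=1}^{p+1}(-1)^{r-1}X_{j_r}\bigl(f_{J\setminus\{j_r\}}\bigr)=0 .
\]
We split this relation into its two components in $C=\CC^{p+1}\oplus\CC^p$. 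The first component reads
\[
 \sum_r(-1)^{r-1}f_{J\setminus\{j_r\}}(j_r)\,v_{p+1}(j_r)=0 .
\]
Here the $p+1$ vectors $v_{p+1}(j_r)$ form the columns of an invertible Vandermonde matrix, since the $j_r$ are distinct integers. Hence $f_{J\setminus\{j_r\}}(j_r)=0$ for every $r$. Ranging over all $J$, this says $f_I(j)=0$ for every $p$-subset $I$ and every $j\notin I$.

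The second component is harder, because it involves $p+1$ vectors $v_p(j_r)$ in the $p$-dimensional space $\CC^p$, so they are dependent. They satisfy exactly one linear relation (up to scaling), with coefficients
\[
 \lambda_r=\prod_{s\ne r}(j_r-j_s)^{-1},
\]
from the divided-difference identity $\sum_r\lambda_r j_r^k=0$ for $k<p$. All the $\lambda_r$ are nonzero. So the second component only tells us that the vector
\[
 \bigl((-1)^{r-1}f'_{J\setminus\{j_r\}}(j_r)\bigr)_r
\]
is proportional to $(\lambda_r)_r$. This does not immediately force the derivatives to vanish, and it is the main obstacle.

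To handle it, I plan to bring in the already established vanishing $f_I(j)=0$. Each $f_I$ has the $p+1$ roots $[n]\setminus I$, so we can factor
\[
 f_I=g_I\prod_{j\notin I}(t-j),\qquad \deg g_I<n-(p+1)=p .
\]
Then $f_I'(j)=g_I(j)\prod_{j'\notin I,\,j'\ne j}(j-j')$ for $j\notin I$. Substituting this into the proportionality relation turns it into a relation of the form $g_{J\setminus\{j_r\}}(j_r)=c_J\,\mu_{J,r}$, where the $\mu_{J,r}$ are explicit products of differences. I expect the products to telescope in a useful way: the factor $\prod_{s\ne r}(j_r-j_s)$ inside $\lambda_r$ should cancel part of $\prod_{j'\notin I}$. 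This should leave a condition of the form $g_I(j)\prod_{k\in I}(j-k)^{-1}\cdot(\text{sign})=c_J$, which is independent of the choice of $r$ within $J$.

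The next step is to propagate this condition across overlapping subsets $J$, which share $p$-subsets. The aim is to show that the rational function $g_I(t)/\prod_{k\in I}(t-k)$ takes a common value on many points outside $I$. Such a function has numerator of degree less than $p$ over a denominator of degree $p$, so its values on $p+1$ points are heavily constrained. From this I would like to conclude $c_J=0$ and $g_I(j)=0$ at all $p+1$ points $j\notin I$. Since $\deg g_I<p$, this forces $g_I=0$. Equivalently, $f_I$ and $f_I'$ both vanish on the $p+1$ points $[n]\setminus I$, giving $2(p+1)=n+1>\deg f_I$ roots counted with multiplicity, so $f_I=0$.

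If the sign bookkeeping does not close up cleanly, my fallback is to order subsets lexicographically and argue by induction that $c_J=0$, starting from subsets $J$ for which some $I\subset J$ already has $g_I$ forced to vanish. Either way, I expect the hardest part to be the careful tracking of signs and of the divided-difference coefficients.
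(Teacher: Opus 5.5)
Your treatment of the first component, and your identification of the unique relation among the $v_p(j_r)$, are correct and agree with the paper. The factorization $f_I=g_I\prod_{j\notin I}(t-j)$ with $\deg g_I\le p-1$ is also a legitimate starting point. The gap is in everything after that, which is where the real content of the proof lies.

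First, the cancellation you expect does not happen. Put $I=J\setminus\{j_r\}$ and $j=j_r$. The factor $\prod_{s\ne r}(j_r-j_s)=\prod_{k\in I}(j-k)$ and the factor $\prod_{j'\notin I,\,j'\ne j}(j-j')=\prod_{j'\notin J}(j-j')$ run over disjoint index sets, so nothing cancels; their product is $\prod_{k\in[n]\setminus\{j\}}(j-k)$. The correct relation is therefore $(-1)^{r-1}g_I(j)\prod_{k\in[n]\setminus\{j\}}(j-k)=c_J$.

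Second, for fixed $I$ the values $g_I(j)$, $j\notin I$, are tied to the $p+1$ \emph{different} unknowns $c_{I\cup\{j\}}$. So there is no common value. Since $g_I$ has degree at most $p-1$ on $p+1$ nodes, each $I$ yields exactly one linear condition: the vanishing of the $p$-th divided difference, $\sum_{j\notin I}g_I(j)/\prod_{j'\notin I,\,j'\ne j}(j-j')=0$. Rewritten in terms of $f_I'$, this is exactly \Cref{lem:odd-fifth-derivative-compatibility}. Nothing about any individual $c_J$ can be concluded from it.

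What is missing is a global step. These conditions form a square linear system, with one equation for each $p$-subset $I$ and unknowns $c_J$ indexed by the $(p+1)$-subsets, and you must prove that this system is nonsingular. Track the signs: $(-1)^{r-1}$ combines with the sign of $\prod_{k\in I}(j-k)$ to give the constant $(-1)^p$. Then rescale by $\tilde c_J=c_J\Delta_{J^c}^2/\Delta_J$, which is, up to sign, the paper's $\alpha_J\Delta_{J^c}^2$. The equation for $I$ becomes $\sum_{j\notin I}\tilde c_{I\cup\{j\}}=0$. So what you need is invertibility of the inclusion matrix between the $p$-subsets and the $(p+1)$-subsets of $[2p+1]$. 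The paper proves this by inclusion--exclusion in \Cref{lem:odd-fifth-subset-sums}.

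Your lexicographic-induction fallback cannot replace this step. No $g_I$ is forced to vanish on its own, so there is no base case. Moreover, every equation involves $p+1\ge 2$ unknowns, so no simultaneous reordering of rows and columns makes the system triangular. Already for $p=1$ the system is $\tilde c_{12}+\tilde c_{13}=\tilde c_{12}+\tilde c_{23}=\tilde c_{13}+\tilde c_{23}=0$, an odd cycle whose nonsingularity depends on the precise signs you were hoping to treat as bookkeeping.
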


To prove this theorem, we require two lemmas. The first is a straightforward application of Lagrange
interpolation. For example, it can be proven using Euler's identity
(see, e.g., \cite[Equation~(2)]{Nica2023}). We give a self-contained proof for completeness.

For $S\subseteq[n]$, let
\[
 \Delta_S=\prod_{\substack{i<j\\i,j\in S}}(j-i),
\]
with empty products equal to $1$.

\begin{lemma}
\label{lem:odd-fifth-derivative-compatibility}
If $f\in\Poly{n}$ vanishes on a $(p+1)$-subset $S\subseteq[n]$, then
\[
 \sum_{i\in S}\Delta_{S\setminus\{i\}}^2 f'(i)=0.
\]
\end{lemma}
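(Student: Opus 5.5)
Since $f$ vanishes on the $(p+1)$-set $S$, we can write $f(t)=g(t)\,\omega_S(t)$ with $\omega_S(t)=\prod_{j\in S}(t-j)$ and $g\in\Poly{n-p-1}=\Poly{p}$, because $n-(p+1)=p$. Differentiating and evaluating at $i\in S$ kills the term $g'(i)\omega_S(i)$, so
\[
 f'(i)=g(i)\,\omega_S'(i)=g(i)\prod_{j\in S\setminus\{i\}}(i-j).
\]
The identity therefore becomes a statement about the weighted sum $\sum_{i\in S}\Delta_{S\setminus\{i\}}^2\,\omega_S'(i)\,g(i)$ with $\deg g\le p-1$. The plan is to show that this weight $\Delta_{S\setminus\{i\}}^2\omega_S'(i)$ equals a constant depending only on $S$ times $1/\omega_S'(i)$. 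The claim then reduces to Euler's identity $\sum_{i\in S} g(i)/\omega_S'(i)=0$ for $\deg g\le |S|-2=p-1$.

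\textbf{Key computation.} First relate $\Delta_S$ to $\Delta_{S\setminus\{i\}}$. The factors of $\Delta_S$ not appearing in $\Delta_{S\setminus\{i\}}$ are those involving $i$, and their product equals $\pm\prod_{j\ne i}(i-j)=\pm\omega_S'(i)$. Squaring removes the sign ambiguity:
\[
 \Delta_S^2=\Delta_{S\setminus\{i\}}^2\,\omega_S'(i)^2.
\]
Hence $\Delta_{S\setminus\{i\}}^2\,\omega_S'(i)=\Delta_S^2/\omega_S'(i)$, and the sum equals $\Delta_S^2\sum_{i\in S}g(i)/\omega_S'(i)$.

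\textbf{Euler's identity.} For a self-contained proof, apply Lagrange interpolation on the $p+1$ nodes of $S$. Since $\deg g\le p$, we have $g(t)=\sum_{i\in S}g(i)\,\frac{\omega_S(t)}{(t-i)\omega_S'(i)}$. Comparing coefficients of $t^{p}$ gives $\sum_{i\in S}g(i)/\omega_S'(i)=[t^p]\,g=0$, because $\deg g\le p-1$. This completes the proof.

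The only delicate point is the sign bookkeeping in $\Delta_S$ versus $\omega_S'(i)$, and the squares make it harmless. I expect no real obstacle beyond checking that the degree count $\deg g\le p-1$ uses $n=2p+1$ correctly.
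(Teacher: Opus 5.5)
Your proof is correct, and it follows the Euler-identity route that the paper mentions (citing Nica) but does not take. The paper instead builds one auxiliary polynomial $q(t)=\Delta_S^2 f(t)-\sum_{i\in S}\Delta_{S\setminus\{i\}}^2 f'(i)\,h_i(t)$ with $h_i(t)=(t-i)\prod_{j\in S\setminus\{i\}}(t-j)^2$. It checks that $q$ has a double root at each point of $S$, so $q=0$ because $\deg q\le 2p+1$. It then reads the identity off the $t^{2p+1}$ coefficient, using that each $h_i$ is monic and $\deg f\le 2p$.

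Both arguments rest on the same computation, $h_k'(k)=\omega_S'(k)^2=\Delta_S^2/\Delta_{S\setminus\{k\}}^2$, which is your relation between $\Delta_S$ and $\Delta_{S\setminus\{i\}}$. Both also end with a top-coefficient comparison. The difference is where the interpolation happens. The paper does a Hermite-type interpolation on $f$ itself with double nodes. You first divide out $\omega_S$ and then do ordinary Lagrange interpolation on the quotient $g$ with simple nodes.

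Your version is a bit more conceptual. It explains why the weights are $\Delta_{S\setminus\{i\}}^2$: they equal $\Delta_S^2/\omega_S'(i)^2$, which turns the sum into the classical $\sum_{i\in S} g(i)/\omega_S'(i)$. It also shows exactly where the hypothesis $\deg f\le 2p$ enters, namely as $\deg g\le |S|-2$. The paper's version skips the factorization and the appeal to Euler's identity, putting everything into a single polynomial whose vanishing is checked pointwise.
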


\begin{proof}
Put $h_i(t)=(t-i)\prod_{j\in S\setminus\{i\}}(t-j)^2$.
Define
\[
 q(t)=\Delta_S^2 f(t)
 -\sum_{i\in S}\Delta_{S\setminus\{i\}}^2 f'(i)h_i(t).
\]
For each $k\in S$, we have $f(k)=0$ and $h_i(k)=0$ for every
$i\in S$, so $q(k)=0$. If $i\ne k$, the factor $(t-k)^2$ in
$h_i(t)$ also gives $h_i'(k)=0$, while
\[
 h_k'(k)=\prod_{j\in S\setminus\{k\}}(k-j)^2
 =\frac{\Delta_S^2}{\Delta_{S\setminus\{k\}}^2}.
\]
Consequently,
\[
 q'(k)=\Delta_S^2 f'(k)
 -\Delta_{S\setminus\{k\}}^2 f'(k)h_k'(k)=0.
\]
Thus $q$ has a double root at each of the $p+1$ points of $S$.
Since its degree is at most $2p+1$, it is zero.
Each $h_i$ is monic of degree
$2p+1$, whereas $\deg f\le2p$, so comparison of leading
coefficients proves the claim.
\end{proof}

\begin{lemma}\label{lem:odd-fifth-subset-sums}
If scalars $c_J$, indexed by the $(p+1)$-subsets of $[n]$, satisfy
\[
 \sum_{i\notin I}c_{I\cup\{i\}}=0
 \qquad\text{for all $p$-subsets }I\subseteq[n],
\]
then every $c_J=0$.
\end{lemma}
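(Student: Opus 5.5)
The statement is that the linear map $\partial:\CC^{\binom{[n]}{p+1}}\to\CC^{\binom{[n]}{p}}$, $(c_J)\mapsto \bigl(\sum_{i\notin I}c_{I\cup\{i\}}\bigr)_I$, is injective when $n=2p+1$. Up to signs, this is the transpose of the inclusion map of the Johnson scheme (up-operator) from $p$-subsets to $(p+1)$-subsets. Since $\binom{n}{p}=\binom{n}{p+1}$, injectivity of $\partial$ is equivalent to invertibility. I would prove injectivity directly by computing $\partial\partial^{\mathsf T}$ or $\partial^{\mathsf T}\partial$ and showing it is positive definite.

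Let $U:\CC^{\binom{[n]}{p}}\to\CC^{\binom{[n]}{p+1}}$ be the up-operator $e_I\mapsto\sum_{i\notin I}e_{I\cup\{i\}}$, and let $D=U^{\mathsf T}$ be the down-operator $e_J\mapsto\sum_{j\in J}e_{J\setminus\{j\}}$. The given hypothesis says exactly $Dc=0$ for $c\in\CC^{\binom{[n]}{p+1}}$. The key step is the classical commutation identity on $k$-subsets,
\[
 D_{k+1}U_k-U_{k-1}D_k=(n-2k)\,\mathrm{id}.
\]
To verify it, compare the $(K,K')$ entries. The diagonal of $DU$ is $n-k$ and that of $UD$ is $k$. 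For $|K\cap K'|=k-1$, both products have the single entry $1$, so these cancel.

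Apply this with $k=p+1$ to $D_{p+1}$. We get $D_{p+2}U_{p+1}=U_pD_{p+1}+(n-2p-2)\,\mathrm{id}=U_pD_{p+1}-\mathrm{id}$, which does not immediately give positivity, so I would instead use $k=p$:
\[
 D_{p+1}U_p=U_{p-1}D_p+(n-2p)\,\mathrm{id}=U_{p-1}D_p+\mathrm{id}.
\]
Since $U_{p-1}D_p=D_p^{\mathsf T}D_p\succeq0$, the matrix $D_{p+1}U_p$ is positive definite, hence invertible. It is square because the dimensions agree. Therefore $D_{p+1}$ is surjective onto $\CC^{\binom{[n]}{p}}$, and as a square map it is also injective. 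So $Dc=0$ forces $c=0$. (For $p=0$, $D_1$ is the $1\times n$ all-ones row with $n=1$, trivially injective.)

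The main obstacle is bookkeeping: getting the commutation identity's constant right and choosing the right level $k=p$ so that the constant equals $n-2p=1>0$. After that, the argument is immediate from positive semidefiniteness of $D^{\mathsf T}D$. An alternative, if needed, is the standard fact that the inclusion matrix between $p$- and $(n-p)$-subsets has full rank (Gottlieb–Kantor). I would use the self-contained commutation argument.
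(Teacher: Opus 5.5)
Your proof is correct, and it takes a different route from the paper. The paper argues by elementary counting. Summing the hypotheses over all $p$-subsets $I\supseteq S$ gives $(p+1-|S|)\sum_{J\supseteq S}c_J=0$, hence $\sum_{J\supseteq S}c_J=0$ for every $|S|\le p$. Then, for a fixed $(p+1)$-subset $K$, it uses $|K^c|=p$ and takes the alternating sum over $S\subseteq K^c$, which by inclusion--exclusion collapses to $c_K$.

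You instead use the up/down commutation relation $D_{k+1}U_k-U_{k-1}D_k=(n-2k)\,\mathrm{id}$. Your entrywise check is right, and entries with $|K\cap K'|<k-1$ vanish in both products. This gives $D_{p+1}D_{p+1}^{\mathsf T}=D_p^{\mathsf T}D_p+\mathrm{id}\succ0$, and you then invoke $\binom{n}{p}=\binom{n}{p+1}$. The paper's argument is self-contained and in effect writes down an explicit left inverse of $D_{p+1}$. Yours is the standard Johnson-scheme spectral argument. It is shorter once the commutation relation is known, and it gives the quantitative bound that all singular values of $D_{p+1}$ are at least $1$.

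One correction: your dismissal of the level $k=p+1$ is mistaken. Rearranging $D_{p+2}U_{p+1}=U_pD_{p+1}-\mathrm{id}$ gives $D_{p+1}^{\mathsf T}D_{p+1}=U_pD_{p+1}=U_{p+1}^{\mathsf T}U_{p+1}+\mathrm{id}\succeq\mathrm{id}$. This is precisely positive definiteness of the Gram matrix on $(p+1)$-subsets, and it yields injectivity of $D_{p+1}$ directly, with no need for the squareness step. The same computation, with constant $2p+2-n$, covers all $n\le 2p+1$. Your $k=p$ route is nonetheless valid as written.
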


\begin{proof}
Fix $S\subseteq[n]$ with $|S|\le p$. Summing the equations over
all $p$-subsets $I$ containing $S$ gives
\[
 0=\sum_{\substack{|I|=p\\S\subseteq I}}
       \sum_{i\notin I}c_{I\cup\{i\}}
  =(p+1-|S|)
   \sum_{\substack{|J|=p+1\\S\subseteq J}}c_J.
\]
Fix a $(p+1)$-subset $K$. For every $S\subseteq K^c$, we have
$|S|\le p$, so the preceding identity gives
$\sum_{\substack{|J|=p+1\\S\subseteq J}}c_J=0$.
Hence,
\begin{align*}
 0&=\sum_{S\subseteq K^c}(-1)^{|S|}
           \sum_{\substack{|J|=p+1\\S\subseteq J}}c_J\\
  &=\sum_{|J|=p+1}c_J
           \sum_{S\subseteq J\cap K^c}(-1)^{|S|}.
\end{align*}
Grouping the subsets in the inner sum by size and applying the
binomial theorem gives
\[
 \sum_{S\subseteq J\cap K^c}(-1)^{|S|}
 =\sum_{s=0}^{|J\cap K^c|}(-1)^s\binom{|J\cap K^c|}{s}
 =\begin{cases}
   1,&J\cap K^c=\varnothing,\\
   0,&J\cap K^c\ne\varnothing.
  \end{cases}
\]
Since $|J|=|K|=p+1$, the condition $J\cap K^c=\varnothing$
is equivalent to $J=K$. Thus only the term $c_K$ remains in the
sum, proving $c_K=0$.
\end{proof}

\begin{proof}[Proof of ~\Cref{thm:odd-fifth-invertibility}]
Let $w$ be a vector in the kernel of $\Kos{p}{S_n}$. We can write
\[
 w=\sum_{|I|=p}a_I\otimes f_I,\qquad f_I\in\Poly{n}.
\]
For each $(p+1)$-subset $J$, write $J=\{j_0<\cdots<j_p\}$.
By \Cref{eq:koszul-definition},
\begin{align*}
 0=\Kos{p}{S_n}w
  &=\sum_{|I|=p}\sum_{i\notin I}
     (a_i\wedge a_I)\otimes X_i(f_I)\\
  &=\sum_{|J|=p+1}a_J\otimes
     \left(\sum_{r=0}^p(-1)^rX_{j_r}(f_{J\setminus\{j_r\}})\right).
\end{align*}
Thus,
for every $(p+1)$-subset $J$,
\[
 \sum_{r=0}^p(-1)^rX_{j_r}(f_{J\setminus\{j_r\}})=0.
\]
By the definition of $X_i$, the two blocks of this equation are
\begin{align*}
 \sum_{r=0}^p(-1)^r
 f_{J\setminus\{j_r\}}(j_r)v_{p+1}(j_r)&=0,\\
 \sum_{r=0}^p(-1)^r
 f'_{J\setminus\{j_r\}}(j_r)v_p(j_r)&=0.
\end{align*}
Since $v_{p+1}(j_r)=(1,j_r,\ldots,j_r^p)^{\mathsf T}$,
the first vector equation is equivalent to the $p+1$ scalar equations
\[
 \sum_{r=0}^p j_r^k
 \left((-1)^r f_{J\setminus\{j_r\}}(j_r)\right)=0
 \qquad(0\le k\le p).
\]
Since the square matrix $[j_r^k]_{0\le k,r\le p}$ is invertible,
it follows that $f_I(i)=0$ for all $p$-subsets $I\subseteq[n]$
and all $i\notin I$.

The second vector equation is equivalent to the $p$ scalar equations
\[
 \sum_{r=0}^p j_r^k
 \left((-1)^r f'_{J\setminus\{j_r\}}(j_r)\right)=0
 \qquad(0\le k<p).
\]
This system has coefficient matrix
$[j_r^k]_{0\le k<p,\,0\le r\le p}$. Deleting column $r$ leaves
a square Vandermonde matrix with determinant
$\Delta_{J\setminus\{j_r\}}\ne0$, so this matrix has rank $p$
and a one-dimensional kernel. For $0\le k<p$, adding
$(j_0^k,\ldots,j_p^k)$ as a new first row gives a square matrix
with two equal rows. Expanding its zero determinant along the
first row gives
\[
 \sum_{r=0}^p(-1)^r j_r^k\Delta_{J\setminus\{j_r\}}=0
 \qquad(0\le k<p).
\]
Thus the nonzero vector
$\bigl((-1)^r\Delta_{J\setminus\{j_r\}}\bigr)_{r=0}^p$
lies in the kernel and spans it. The vector
$\bigl((-1)^r f'_{J\setminus\{j_r\}}(j_r)\bigr)_{r=0}^p$
is therefore a scalar multiple of this vector. Hence, there is a scalar $\alpha_J$ such that
\begin{equation}\label{eq:odd-fifth-kernel-data}
 f_{J\setminus \{i\}}(i)=0,\qquad
 f'_{J \setminus \{i\}}(i)=\alpha_{J}\Delta_{J \setminus \{i\}}
 \qquad\text{for every $i \in J$}.
\end{equation}

Fix a $p$-subset $I \subseteq [n]$, and recall that $f_I$ vanishes on $I^c$. Applying \Cref{lem:odd-fifth-derivative-compatibility}, and \Cref{eq:odd-fifth-kernel-data} for each $i \notin I$, gives
\[
 0=\sum_{i\notin I}\Delta_{(I\cup\{i\})^c}^2 f'_I(i)
  =\Delta_I\sum_{i\notin I}
       \alpha_{I\cup\{i\}}\Delta_{(I\cup\{i\})^c}^2.
\]
Since $\Delta_I\ne0$, \Cref{lem:odd-fifth-subset-sums} applied to
$c_J=\alpha_J\Delta_{J^c}^2$ gives $\alpha_J=0$ for every $J$.
By \Cref{eq:odd-fifth-kernel-data}, $f_I$ has $p+1$ distinct double
roots but degree at most $2p$, so $f_I=0$.
Hence $\Kos{p}{S_n}$ is injective. It is square because
$\binom np=\binom n{p+1}$, and therefore invertible.
\end{proof}

%

%

\begingroup
\small
\setlength{\bibsep}{4pt}
\interlinepenalty=10000
\bibliographystyle{unsrtnat}
\bibliography{references}
\endgroup

\end{document}